%


\documentclass[11pt]{article}
\usepackage[ansinew]{inputenc}

\usepackage{latexsym,bezier,enumerate,longtable,dcolumn,color,pstcol,xspace,curves,mathpazo,url}
\usepackage{amsmath,amsthm,amsopn,amstext,amscd,amsfonts,amssymb,fancybox,pst-node,pst-tree}

\usepackage{epsfig,epic,graphicx,pstricks}

\topmargin 0.0cm
\textwidth 15.5cm 
\textheight 22cm     
\oddsidemargin 0.5cm 
\evensidemargin 0.5cm 
\pagestyle{plain}
\parskip 0.1cm 

%
%

%
%
%
%
\newcommand{\V}[1]{\mathbf{#1}}

\theoremstyle{plain}
\newtheorem{proposition}{\bf Proposition}
\newtheorem{theorem}{\bf Theorem}

\newtheorem{corollary}{\bf Corollary}
\theoremstyle{definition}
\newtheorem{definition}{\bf Definition}
\newtheorem{example}{\bf Example}

%
\begin{document}

\title{Some game theoretic marketing attribution models
\thanks{This research has been supported by I+D+i research project MTM2015-70550-P from the Government of Spain.}
}


\maketitle

\vspace*{-1cm} {\baselineskip .2in

\begin{center}
{\small
\begin{tabular}{l}
\bf Elisenda Molina
\\ Departamento de Estad\'{\i}stica, Universidad Carlos III de Madrid, Spain
\\ e-mail: \url{elisenda.molina@uc3m.es}
\\
\bf Juan Tejada
\\ Instituto de Matem\'atica Interdisciplinar (IMI), Departamento de Estad\'{\i}stica e Investigaci\'on Operativa,
\\ Universidad Complutense de Madrid, Spain
\\  e-mail: \url{jtejada@mat.ucm.es}
\\
 \bf Tom Weiss
\\ Deductive Inc., US
\\ e-mail: \url{tom@deductive.com}
\end{tabular}}
\end{center}
}

\begin{abstract}
In this paper, we propose and analyse two game theoretical models useful to design marketing channels attribution mechanisms based on cooperative TU games and bankruptcy problems, respectively. First, we analyse the {\it Sum Game}, a coalitional game introduced by Morales (2016). We extend the ideas introduced in Zhao et al. (2018) and Cano-Berlanga et al. (2017) to the case in which the order and the repetition of channels on the paths to conversion are taken into account. In all studied cases, the Shapley value is proposed as the attribution mechanism. Second, a bankruptcy problem approach is proposed, and a similar analysis is developed relying on the Constrained Equal Loss (CEL) and Proportional (PROP) rules as attribution mechanisms. In particular, it is relevant to note that the class of attribution bankruptcy problems is a proper subclass of bankruptcy problems.

\textbf{Keywords}: {Cooperative Game Theory, Marketing, Channel Attribution, Shapley value, Bankruptcy problems. }
\end{abstract}

\section{Introduction}
\label{intro}

The attribution of the obtained benefits to the different channels involved in a marketing campaign is a relevant issue because it can help to optimally assign the marketing budget and, in general, to have a deep knowledge of the effects of a campaign. Therefore there is a vast literature on this problem in the field of marketing that we do not try to analyse here. The reader is referred to Jayawardane et al. (2015) and Choi (2020) for a review and classification of the methodologies considered in this field.

Taking into account that an attribution problem is in essence a benefit allocation problem, we are interested in a game-theoretical approach to the marketing attribution problem. Relying on Game Theory models and solutions, allows the decision-maker to select a specific attribution rule based on a list of properties which she considers relevant to her problem. Moreover, she could use the proposed rules to stimulate channels to improve their efficiency through incentives. The existing literature on this topic is scarcer. Perhaps, the first paper that follows a cooperative game approach is Dalessandro et al. (2012), in which  an attribution methodology based on a causal estimation problem that uses the concept of Shapley value is proposed. Other subsequent papers adopt also the Shapley value as an attribution methodology with characteristic functions that are based on a probabilistic markovian approach. See, for instance, Singal et al. (2019).

Our approach is pure deterministic assuming that a Key Performance Index (KPI) is already defined for measuring the benefits associated with a conversion. Thus we focus on game theoretical models that can be built based on that KPI.

Some not formally published papers (Morales, 2016; Zhao et al. 2018, Cano-Berlanga) addressed this problem and, in particular, they work with the first model we analyse: the sum game. We introduce some extensions of the sum game that take into account the visiting order of the marketing channels and the possibility of repetition in the paths to conversion. The
game-theoretical solution concept we propose here as an attribution rule is the Shapley value (Shapley, 1953).

Moreover, a new different model is proposed by considering the attribution problem as a bankruptcy problem (Aumann and Maschler, 1985). For this new model, we propose the Constrained Equal Loss (CEL) and Proportional (PROP) rules as attribution rules. A similar analysis to that  did previously for the sum game model and its related Shapley attribution rule is developed in this case for the two proposed rules.

The paper is organized as follows. In section \ref{Attribution_Preli} we introduce more formally the marketing attribution problem we shall deal with.  Section \ref{ShapleyRule} introduces the general concepts of cooperative game theory we shall employ and is devoted to propose and analyse the sum game and the derived Shapley based attribution rule. In particular, we shall consider different cases depending on the relevance of the order or the repetitions of channels in the paths to conversion. A similar analysis is done for the bankruptcy model introduced in section \ref{Bank}. Some final conclusions are included in section \ref{Con}.



\section{The attribution problem}
\label{Attribution_Preli}
We are assuming that an advertising campaign exists in which an advertisement is broadcast through a set of channels. The users can have multiple touch-points with the campaign by watching the Ad in some of those channels. After this, in some moment, a conversion of a user could happen by purchasing (in a very wide sense) the advertised product producing a measurable benefit. The attribution problem is then how to attribute to the different channels that were watched before the conversion the benefit produced by that conversion. Let us formalize these ideas:

Let  $N=\{1,2,...,n\}$ be the set of channels involved in the campaign. Formally, a {\em path to conversion} is any finite ordered sequence formed with channels of $N$, $p=(i_1, i_2,...,i_{\ell_p})$, where $\ell_p$ is the length of the path $p$. We must remark that a channel can appear more than once in a path. We shall denote by $p_{(j)}\in N$, the channel that appear in the position $j$ in path $p$.

Note that the cardinal of the set of {\em all possible} paths to conversion ${\cal P}(N)$ is, in principle, infinite. However, we shall consider only finite sets of paths to conversion, since in practice only a finite numbers of paths $P(N)\subset {\cal P} (N)$ are observed. Since the benefit generated by any non realized path is zero all those paths will not belong to the support of the considered problems.

The benefit of a path is given by a Key Performance Index: $f:P(N)\longrightarrow \mathbb{R}$, that assigns to any observed path to conversion $p\in P(N)$ a measure $f(p)\geq 0$ of the benefit obtained by conversions of all users that have followed this path $p$. Thus, the total benefit of the campaign is $B=\sum_{p\in P(N)} f(p)$

In what follows, we shall assume that $f(p)$ is the sum of the benefits produced by all the users that have exactly followed the same path $p$ to the conversion. Also, we shall assume that spontaneous conversions without having watching the advertisement in any channel are already discounted in such a way that the benefit of the null path is zero.

An {\bf attribution problem} is given then by the 3-tuple $A=(N, P(N), f)$, and consists in assigning an attribution $a_i\geq 0$ to each channel $i$ in $N$ of the common benefits obtained in the campaign $B$.

For that problem some classical approaches that base the attribution in the orders the channels appears in a path to conversion, are commonly used in practice (first touch, last touch, indirect last touch, time decay, among others). We propose attribution rules based on classical rules for cooperative games with transferable utility and bankruptcy problems which allow us to get a deeper insight on the attribution mechanisms to be used in order to promote a more efficient behaviour of the channels and to help the advertiser to assign optimally her marketing budget.


	







\section{Shapley value attribution rule}
\label{ShapleyRule}

In this section we propose an attribution rule based on the Shapley value (Sahpley 1953) of an appropriate cooperative game with transferable utility, TU games in the sequel. First, let us recover some basic definitions and results regarding TU games and the Shapley value.
	
\subsection{Cooperative games}
\label{CoopGames}

A cooperative game in coalitional form with side payments, or with
 transferable utility, is an ordered pair $(N,v)$, where
 $N$ is a finite set of players and $v:2^N\rightarrow \mathbb{R}$, with $2^N=\{ S \,\vert \, S\subset N\}$, is a {\em characteristic function} on $N$ satisfying $v(\emptyset)=0$. For any coalition $S\subset N$,  $v(S)\in \mathbb{R}$ is the {\em worth} of coalition $S$ and  represents the reward that coalition $S$ can achieve by itself if all its members act together. Since we will restrict to the case of TU games in the sequel, we will refer to them simply as {\em games}. For brevity, throughout the paper, the cardinality of sets (coalitions) $N, S$ will be denoted by appropriate small letters $n,s$, respectively. Also, for notational convenience, we will write singleton $\{ i\}$ as $i$, when no ambiguity appears.

A game $(N,v)$ is {\em superadditive} if $v(S\cup T)\geq v(S)+v(T)$, for every disjoint coalitions $S\cap T\neq \emptyset$; is {\em monotone} if $v(S)\leq v(T)$, whenever $S\subseteq T$; and it is {\em convex} if $v(S\cup T)\geq v(S)+v(T)-v(S\cap T)$ for every pair $S,T\subseteq N$ of coalitions.

One of the main topics dealt with in  Cooperative Game Theory is, given a game $(N,v)\in G_n$, to divide the amount $v(N)$ between players if the grand coalition $N$ is formed. Let ${\cal U}=\{1,2,\dots\}$ be the {\em universe of players}, and let ${\cal N}$ be the class of all non-empty finite subsets of ${\cal U}$. For an  element $N$ in ${\cal N}$, let ${\cal G}_N$ denote the set of all characteristic functions on player set $N$, and let ${\cal G}=\displaystyle\bigcup_{N\in {\cal N}} {\cal G}_N$ be the set of all characteristic functions.\footnote{We will use interchangeably the two terminologies, game and characteristic function, when no ambiguity appears.}
A {\em payoff vector}, or {\em allocation}, is any $\V{x}\in \mathbb{R}^n$, which gives player $i\in N$ a payoff $x_i$.  A payoff vector is said to be {\em efficient} if $\sum_{i\in N} x_i =v (N)$. Is is {\em stable} if it is efficient and $\sum_{i\in S} x_i \geq v(S)$, for every $S\subseteq N$. The set of all stable payoff vectors is called, which will be denoted by $C(v)$ the {\em Core} of the game (Gillies 1953). The core of a game can be empty, however if the game is convex is well known that its core is nonempty.

A {\em value} $\varphi$ is an assignation which associates to each game $v\in {\cal G}_N$, $N\in {\cal N}$, a payoff vector $\varphi(N,v)=(\varphi_i(N,v))_{i\in N}\in \mathbb{R}^N$, where $\varphi_i(N,v)\in \mathbb{R}$ represents the {\em value} of player $i$, $i\in N$. Shapley (1953a) defines his value as follows:
\begin{equation}
\displaystyle
 \phi_i (N,v) = \sum_{S\subset N\setminus i} \frac{s!(n-s-1)!}{n!} \bigl( v(S\cup \{ i\}) -
v(S)
 \bigr ), \quad i\in N.
\label{shapley-marginal}
\end{equation}
 The {\em value} $\phi_i (N,v)$ of each player, which is a weighted average of his marginal contributions, can be interpreted as the {\em payoff} that player $i$ receives when the Shapley value is used to predict the allocation of resources in multiperson interactions.

A value is {\em stable} whenever  $\varphi(N,v)=(\varphi_i(N,v))_{i\in N}\in \mathbb{R}^N\in Core(N,v)$, for every game $v\in {\cal G}_N$, $N\in {\cal N}$. If the game is convex the Shapley value is stable (Shapley, 1971).

The Shapley value admits an alternative expression in terms of the {\em Harsanyi dividends} of every coalition $S$ in $(N,v)$, which are given by
\begin{equation*}
d_S=\sum_{T\subseteq S} (-1)^{|S|-|T|} v(T), \;\forall\, S\subseteq N.
\end{equation*}
The Harsanyi dividends can be calculated recursively:
\begin{equation}
    d_S=v(S)-\sum_{T\subset S} d_T, \forall S\subseteq N.
\label{HarsanyiDiv_recur}
\end{equation}
Then, the Shapley value can be expressed from the Harsanyi dividends as follows (see Shapley, 1953):
\begin{equation}
    \phi_i=\sum_{\substack{S\subseteq N \\ i\in S }} \frac{d_S}{|S|}
    \label{ShapleyDiv}
\end{equation}
The formula above is based on the linearity of the Shapley value and the expression of the game $(N,v)$ in terms of the basis of the {\em unanimity
games}\footnote{Recall that a game $(N,v)$ is a \emph{unanimity game} if there exists a coalition $S$ such that for every $T\subseteq N$, $v(T)=1$ if $S\subseteq T$, and $v(T)=0$ otherwise; in this case, we will denote the game by $(N,u^S)$ and its Shapley value is $\phi_i(N,u^S)=\frac{1}{s}$, for all $i\in S$, and  $\phi_i(N,u^S)=0$, otherwise. Unanimity games are a basis of the vector space ${\cal G}_N$.}.

A game $(N,v)$ is {\em totally positive}  when all its Harsanyi dividends are nonnegative. Every totally positive game is convex, therefore its Shapley value belongs to its Core.

%
%
%
%

\subsection{The sum game and the Shapley attribution rule}
\label{SumGame}

In this section we introduce a cooperative game-theoretical model for the attribution problem that we call the {\em sum game}. We shall consider first the case in which neither order nor repetition is considered relevant to the problem. That is, only the information of the channels in which the ad has been seen is retained. The model is then extended to the more general cases in which order or repetition may be relevant.

The main advantages of a game-theoretical approach are twofold. On the one hand, relying on Game Theory allows the decision maker to select a specific attribution rule based on a list of properties which she considers relevant to her problem. On the other hand, by the proposed rules she stimulates the channels to improve their presence and efficiency by means of incentives, since the proposed rules reward that kind of behaviour.

\subsubsection{Order non relevant case}
\label{No_Order}

We consider first the case in which neither order nor repetition is considered relevant to the attribution problem. Therefore, given the attribution problem $A=( N, P(N), f)$, the information of a given path $p$ that we retain is only the set of channels that appear in any position of the path:
\begin{equation*}
    S_p=\{i\in N\, / \, i\in p\}\subseteq N.
\end{equation*}
Thus, we can consider that we work with a set of $n$ channels $N=\{1,\dots ,n\}$ and the unique relevant information about their performance is given by the following aggregated KPI function defined over the subsets of $N$:
\begin{equation}
    f(S):=\sum_{\substack{p\in P(N) \\ S_p=S}} f(p), \;\forall \, S\subseteq N,
\label{eq:AggKPI}
\end{equation}

i.e. $f(S)$ is the total benefit produced by all users that have seen the ad exactly in the channels in $S$, regardless of the order or  the number of times.

In order to define the {\em sum game} $(N,v_{\Sigma})$ it is important to distinguish between {\em combination} of channels and {\em coalition} of channels. The first one, a {\em combination of channels} $S$ is a subset of channels in which some subset of users has seen the advertisement on all of them (and not on other channels). The second one, a {\em coalition} ${\cal S}$ is also a subset of channels, but it can form all its possible combinations with its channels. Actually, the combination $S$ and the coalition ${\cal S}$ are using the same set of channels but differ in their interpretation.

For notational convenience we will nos differentiate notation between combination and coalition. Clearly, $S$ will be a combination whenever it is an argument of the KPI function $f$ and a coalition whenever it is an argument of the characteristic function $v_{\Sigma}$

Taking into account that coalition of channels ${\cal S}$ could be also understood as the set of all possible combinations that can be formed with the channels in the coalition, the following {\em sum game} (Morales, 2016, Cano-Berlanga et al., 2017, and Zhao et al. 2018)

\begin{definition}
Given a set $N=\{1,2,...,n\}$ of players and a set function $f: 2^N \rightarrow \mathbb{R}$ such that $f(S)\geq 0$ for all $S\subseteq N$ the {\bf sum game} $(N, v_{\Sigma})$ is defined by
\begin{equation}
    v_{\Sigma}(S)=\sum_{T\subseteq N} f(T),\;  \forall \, S\subseteq N.
\label{exp_sumgame}
\end{equation}
\label{def_sumgame}
\end{definition}

It is known (Cano-Berlanga et al., 2018) that the sum game is monotone and convex, then it is superadditive, its core is nonempty and the Shapley value belongs to the core. Moreover, its  Shapley value can be simplified and expressed in terms of the KPI function $f(\cdot)$ as follows:
\begin{equation}
    \phi_i (N,v_{\Sigma})=\sum_{\substack{S\subseteq N \\ i\in S}} \frac{f(S)}{|S|},\; \forall\, i\in N.
\label{Shapley_dividends}
\end{equation}
Zhao et al. (2018) proved the above results by means of the original expression \eqref{shapley-marginal} of the Shapley value. However, we include here an alternative proof based on the Harsanyi dividends of a sum game which is more clear and allows us to achieve a deeper insight on the class of sum games.

\begin{proposition}
For any given sum game $(N, v_{\Sigma})$, the Harsanyi dividends are given by the corresponding benefit function $f(\cdot)$ that defines the game. That is, $d_S=f(S)$ for all $S\subseteq N$.
\label{dividends_sumgame}
\end{proposition}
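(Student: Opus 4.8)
The plan is to prove the claim by exploiting the recursive characterization \eqref{HarsanyiDiv_recur} of the Harsanyi dividends, which determines them uniquely, and to verify that the candidate values $\hat d_S := f(S)$ satisfy that recursion for the game $v_\Sigma$. First I would recall from \eqref{exp_sumgame} that $v_\Sigma(S)=\sum_{T\subseteq S} f(T)$ for every $S\subseteq N$ (with $v_\Sigma(\emptyset)=f(\emptyset)=0$). Then, for any fixed $S\subseteq N$,
\[
 v_\Sigma(S)-\sum_{T\subsetneq S}\hat d_T=\sum_{T\subseteq S} f(T)-\sum_{T\subsetneq S} f(T)=f(S)=\hat d_S,
\]
which is exactly \eqref{HarsanyiDiv_recur} with $d$ replaced by $\hat d$. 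Since \eqref{HarsanyiDiv_recur} builds the family $(d_S)_{S\subseteq N}$ unambiguously by increasing cardinality starting from $d_\emptyset=0$, this forces $d_S=\hat d_S=f(S)$ for all $S$.

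Alternatively --- and this is the route I would actually include, as it makes transparent that $v_\Sigma$ is the ``zeta transform'' of $f$ and the dividends are its M\"obius inversion --- I would compute directly from the definition $d_S=\sum_{T\subseteq S}(-1)^{|S|-|T|}v_\Sigma(T)$. Substituting $v_\Sigma(T)=\sum_{R\subseteq T} f(R)$ and exchanging the order of summation gives
\[
 d_S=\sum_{R\subseteq S} f(R)\sum_{\substack{T\,:\, R\subseteq T\subseteq S}}(-1)^{|S|-|T|}.
\]
The inner sum reindexes, via $U=T\setminus R$, to $\sum_{U\subseteq S\setminus R}(-1)^{|S\setminus R|-|U|}$, which equals $0$ whenever $S\setminus R\neq\emptyset$ and equals $1$ when $R=S$. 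Hence only the term $R=S$ survives, and $d_S=f(S)$.

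The argument involves no real obstacle; the only point requiring care is the binomial identity $\sum_{U\subseteq W}(-1)^{|U|}=0$ for $W\neq\emptyset$ used to collapse the inner sum, together with keeping the nesting of the index sets straight when swapping the two summations. Once $d_S=f(S)$ is established, the statements quoted before the proposition follow immediately: nonnegativity of all dividends (since $f\geq 0$), hence total positivity and convexity of $v_\Sigma$, hence nonemptiness of the core and stability of the Shapley value; and substituting $d_S=f(S)$ into \eqref{ShapleyDiv} yields formula \eqref{Shapley_dividends}.
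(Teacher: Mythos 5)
Your proposal is correct, and its first half is essentially the paper's own proof: the paper also argues by induction on $|S|$ via the recursive formula \eqref{HarsanyiDiv_recur}, computing $d_S=v_{\Sigma}(S)-\sum_{T\subset S}d_T=\sum_{T\subseteq S}f(T)-\sum_{T\subset S}f(T)=f(S)$; your observation that the recursion determines the dividends uniquely by increasing cardinality is just a cleaner way of phrasing that induction. Your second argument --- direct M\"obius inversion of $d_S=\sum_{T\subseteq S}(-1)^{|S|-|T|}v_{\Sigma}(T)$ using the identity $\sum_{U\subseteq W}(-1)^{|U|}=0$ for $W\neq\emptyset$ --- is a genuinely different route that the paper does not take; it buys a conceptual explanation (the sum game is the zeta transform of $f$, so its dividends must be $f$ itself) at the cost of a double-sum interchange, whereas the recursive verification is shorter and matches the paper. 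One small point worth flagging: as displayed, \eqref{exp_sumgame} reads $v_{\Sigma}(S)=\sum_{T\subseteq N}f(T)$, which is evidently a typo for $\sum_{T\subseteq S}f(T)$ (the paper's own proof silently uses the corrected version, as do you); it is good practice to note that correction explicitly, since with the literal formula the game would be constant and the claim false.
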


\begin{proof}
We prove that the Harsanyi dividends of a sum game verify $d_S=f(S),$ $\forall\, S\subseteq N$,by induction on the size of $S$. Employing the recursive formula \eqref{HarsanyiDiv_recur} we obtain:

\begin{equation*}
    d_{\emptyset}=0;\,\,\,\,\,  d_i=f(i), \forall i\in N;\,\,\,\,\,  d_{\{i,j\}}=f(i,j), \forall \{i, j\}\subset N.
\end{equation*}

Let $S\subseteq N$ and assume that $d_T=f(T)$, $\forall\, T\subseteq N$ such that $t\leq s-1$. Again, by the recursive formula \eqref{HarsanyiDiv_recur}:
\begin{equation*}
    d_S=v_{\Sigma}(S)-\sum_{T\subset S} d_T=\sum_{T\subseteq N} f(T)-\sum_{T\subset S} f(T)=f(S).
\end{equation*}
\end{proof}

Note that the above expression \eqref{Shapley_dividends} follows straightforward from the previous proposition taking into account expression \eqref{ShapleyDiv} of the Shapley value in terms of the dividends of the game. Although  expression \eqref{Shapley_dividends} is simpler than the original one, it still has in principle the problem of involving a theoretically exponential number in $n$ of addends. However, in practice, the number of non-zero terms is  manageable.

As $d_S=f(S)\geq 0, \forall S\subseteq N$, the sum game is totally positive. Moreover, the sum game's class coincides with the class of totally positive games by considering that $f(S):=d_S, \forall S\subseteq N$.

In the sequel, when we refer to an {\em attribution rule} through this section we will be referring to {\em value} defined over the subclass ${\cal P}\subset {\cal G}$ of totally positive TU games.

Therefore, the classical axioms of efficiency, symmetry, null player and additivity characterize the Shapley value as an attribution rule because the sum of two totally positive games is also a totally positive game. Next, we resume these properties which characterize the Shapley value as well as some other interesting ones in terms of attribution problems.




%
\begin{itemize}
    \item  {\bf Efficiency.}
The Shapley value distributes or imputes exactly the global worth obtained from all observed combinations among the involved channels:
$$
\sum_{i\in N} \phi_i (N, v_\Sigma)=v_\Sigma ({\cal N}):=\sum_{S\subseteq N} f(S).
$$
\item{\bf Additivity.}
The Shapley value is an {\em additive} rule. Therefore,
$$
\phi_i (N, v_\Sigma^{agg})=\phi_i (N, v_\Sigma^{1}) + \phi_i (N, v_\Sigma^{2}), \; \text{for every channel $i\in N$,}
$$
where $v_\Sigma^{agg}({\cal S}):=\sum_{T\subseteq S} \bigl ( f_1(T)+f_2(T) \bigr )$, and   $v_\Sigma^{k}({\cal S}):=\sum_{T\subseteq S}  f_k(T)$, $k=1,2$, for each coalition ${\cal S}\subseteq {\cal N}$.

Starting from a dataset with a given set of channels $N$, which includes information about the KPI of two campaigns (let $f_1$ and $f_2$ be the two estimations obtained from the observed values of both campaigns), if the joint KPI of both campaigns is given by the aggregation $f_{agg} (S):=f_1(S)+f_2(S)$, for every combination $S\subseteq N$, then the Shapley value of any channel in the joint campaign equals the sum of its Shapley values in each of the original campaigns $f_1$ and $f_2$.
\item {\bf Symmetry.} For every pair $i,j\in N$ of indistinguishable channels in terms of performance, i.e. $f(S\cup i)=f(S\cup j)$ for every $S\subseteq N\setminus\{i,j\}$, the Shapley value of the sum game gives both channels the same value: $\phi_i(N,v_{\Sigma})=\phi_j(N,v_{\Sigma})$ .

\item {\bf Null player.}
The null channel property sets that a player (channel) that not make any contribution to the worth function with respect any subset of the rest of channels, i.e. $f(S\cup i)=0$, for every combination $S\subseteq N\setminus i$,
must receive a value $\phi_i(N,v_{\Sigma})=0$.

\item {\bf Stand-alone property.}
The stand-alone property, which sets that the value attributed to a channel cannot be less than the value it can obtain by itself: $\phi_i(N, v_\Sigma)\geq v(i)=f(i), \forall i\in N$.
\item {\bf Fair ranking.}
If $f(S\cup i)\geq f(S\cup j)$, for every combination $S\subseteq N\setminus \{ i,j\}$, and thus the combination with channel $i$ is more profitable (or at least equally profitable) than the combination with channel $j$ for every combination $S$, channel $i$ should rank better than channel $j$. That is, $\phi_i(N, v_\Sigma)\geq \phi_j(N,v_\Sigma)$.

\item {\bf Stability.}
When using the Shapley value the global value imputed to each combination $S$ of channels is always greater or equal than the value generated by all conversions from users exposed to any possible combination of channels in $S$ that have not received impressions in any other channel  (which is precisely given by $v_\Sigma({\cal S})$). Thus, the combination of channels in $S$ has no incentive to reject the proposed attribution scheme.
$$
\phi(S)(N,v_\Sigma):=\sum_{i \in S} \phi_i (N,v_\Sigma) \geq v_\Sigma({\cal S})=\sum_{ T\subseteq S}f(T),\quad \forall\, S\subseteq N .
$$
\item {\bf No subsidizing property.}
First, let us formally define the concept of {\bf independent set of channels}: a given subset $S^*\subseteq N$ is an {\em independent set of channels} if $f(S\cup T)=0$, for all $S\subseteq S^*$ and all $\emptyset \neq T\subseteq N\setminus S^*$.

Note that the worth generated by an independent subset of channels equals $\sum_{S\subseteq S^*} f(S)$, which is precisely $v_\Sigma({\cal S}^*)$, and moreover, it can be clearly identifiable and thus should be imputed to $S^*$. In these situations, in which no conversion (or at least a worthy conversion) has been made by users exposed to combinations of channels mixing some channels of $S^*$ with some channels not in $S^*$, channels in $S^*$ should not receive any credit from conversions of devices exposed to channels in $N\setminus S^*$, and the other way around.

This is the idea of the following property.

If there exists and independent subset of channels $S^*\subseteq N$, then the Shapley value imputes to those channels in $S$ exactly the global value generated by them, which in that case coincides the value generated by all conversions from users exposed to any possible combination of channels in $S^*$ that have not received impressions in any other channel.
$$
\phi(S^*):=\sum_{i \in S^*} \phi_i (N,v_\Sigma) = v_\Sigma({\cal S}^*)=\sum_{ T\subseteq S^*}f(T), \quad \forall \text{ independent }  S^*\subseteq N.
$$

\end{itemize}



%
%
%
%



Next, we enrich the sum game approach in order to deal with more general attribution situations in which the order in which channels appear in the conversion path as well as the number of times each channel appears play a relevant role. First, we consider in section \ref{No_OrderRep} the case in which only the number of times a channel appears in a conversion path is relevant. Then, in section \ref{Order} we take into account the order of appearance.

\subsubsection{Considering repetitions}
\label{No_OrderRep}

In this case, the order in not considered relevant, but it is considered relevant the fact that a channel can appear more than once in a path.

\begin{example}
The paths $(1, 2)$ and $(2, 1)$ are considered not distinguishable, but different from the path $(2, 1, 2, 2)$, for instance. In this case, we consider that the worth of channel $2$ must be greater than the worth of channel $1$ in that path to conversion.
\label{Ex1}
\end{example}

The above approach of aggregating the values of different paths that share the same subset $S$ of players for obtaining the value $f(S)$ and later the value $v_{\Sigma}(S)$ will be done with precaution now.
In the previous example, if we define $f(\{1, 2\})=f((1,2))+f((2,1))+f((2, 1, 2, 2))$ the information about the repetition of Ad views in the channel $2$ in the third path is lost. Then, we propose to create additional players that replicates a player that appear more than once in a path.

Formally, let $r_i$ the maximum number of times player $i$ appears in any path of the attribution problem $( N,P(N),f )$, i.e.
\begin{equation}
r_i=\max_{p\in P_i(N)} n_i(p),
\label{def:ri}
\end{equation}
where $n_i(p)$ is the number of times channel $i$ appears in path $p$, $i=1,\dots,n$. Then we create fictitious players $i^1, i^2,\dots,i^{r_i}$ that substitute original channel $i$. To be specific, if channel $i$ appears $\ell=n_i(p)$ times in path $p$, then channel $i$  is substituted by fictitious players  $i^1, i^2,\dots,i^{\ell}$ in this path $p$, for all $p\in P_i(N)$, and for all $i\in N$.  For those new  players the sum game $(N^r, v^r_{\Sigma})$ and its corresponding Shapley value are defined accordingly to  previous section's definitions taking into account that $N^r=\cup_{i=1}^n S^r_i$, where $S^r_i =\{i^1, i^2,\dots,i^{r_i}\}$, $i=1,\dots,n$, and being the KPI function $f^r$ -defined over combinations of fictitious channels
 in $N^r$- given by the following sum:
\begin{equation}
f^r(S^r):=\sum_{\substack{p\in P(N) \\ n_j(p)=|S^r\cap S_j^r|, \forall j}} f(p), \; \forall \, S^r\subseteq N^r.
\label{def:KPIrep}
\end{equation}

\begin{example} Let us consider the case described above in Example \ref{Ex1} with the following KPI values:

\begin{table}[h]
\caption{Example \ref{Ex1}  with repetitions}
\label{DataRepEx}       
\begin{tabular}{lc}
\hline\noalign{\smallskip}
Path $p$ & KPI value $f(p)$
\\
\noalign{\smallskip}\hline\noalign{\smallskip}
$(1)$ & 20  \\
$(1,2)$ & 40  \\
$(2,1)$ & 10  \\
$(2,1,2)$ & 30  \\
\noalign{\smallskip}\hline
\end{tabular}
\end{table}

Then we create fictitious players $1^1, 2^1, 2^2$ and consider the sum game $(N^r=\{1^1,2^1,2^2\},v_{\Sigma}^r)$ based on the KPI function $f^r$ depicted in the new table \ref{DataRepKv}:
%
%

\begin{table}[h]
\caption{KPI and characteristic function for combinations and coalitions}
\label{DataRepKv}       
\begin{tabular}{lcc}
\hline\noalign{\smallskip}
$S^r$ & $f^r(S)$ & $v^r_{\Sigma}(S)$  \\
\noalign{\smallskip}\hline\noalign{\smallskip}
$\{ 1^1\} $ & 20  & 20\\
$\{2^1\}$ & 0 & 0  \\
$\{2^2\}$ & 0 & 0  \\
$\{1^1, 2^1\}$ & 50 & 70  \\
$\{1^1,2^2\}$ & 0 & 20 \\
$\{2^1,2^2\}$ & 0 & 0  \\
$\{1^1,2^1,2^2\}$ & 30 & 100 \\
\noalign{\smallskip}\hline
\end{tabular}
\end{table}

Thus, the Shapley value of the sum game $(N^r,v^r_{\Sigma})$ are given by $\phi_{1^1}(N^r,v^r_{\Sigma}) =55$, $\phi_{2^1}(N^r,v^r_{\Sigma})=35$, and            $\phi_{2^2}(N^r,v^r_{\Sigma})=10$.

Now, taking into account that the Shapley value of each channel when nor order neither repetition are relevant\footnote{In this case, the value of coalition $S=\{1,2\}$ will be given by the sum $f(1)+f(1,2)+f(2,1)+f(2,1,2)=100$.} is $\phi_1(N,v_{\Sigma})=60$ and $\phi_2(N,v_{\Sigma})=40$, we can observe that $\phi_{2^1}(N^r,v^r_{\Sigma})+\phi_{2^2}(N^r,v^r_{\Sigma})=45>\phi_2(N,v_{\Sigma})$. That is, channel $2$ increases its worth due to the repetition in one path.
\end{example}

Next, we formalize the ideas shown in this example. First we define an attribution when repetition is relevant, which we refer to as {\em Shapley value-like attribution} that is based on the Shapley value of the extended sum game $(N^r,v^r_{\Sigma})$. Then, we deduce a simple expression for it based on the KPI of the observed paths, and we introduce the property of {\em monotonicity with respect to channel repetition} to capture the fact that if a given channel increases its appearances while the remaining characteristics stay its attribution will improve, or at least will not worsen. We end up this case showing that the proposed Sahpley value-like attribution preserves the essential properties of the Shaple value:
efficiency, additivity, symmetry and null player.

\begin{definition}
Formally, the {\bf Shapley value-like attribution} of each channel $i\in N$ in this framework will be given by the following sum:
\begin{equation}
   \phi_i^r(N,P(N),f):= \phi_{i^1} (N^r,v^r_{\Sigma})+\cdots +\phi_{i^{r_i}}(N^r,v^r_{\Sigma}), \; i\in N,
\label{def:ShapleyRep_eq}
\end{equation}
where $S_i^r=\{i^1, i^2,...,i^{r_i}\}$ and being $r_i\geq 1$ the maximum number defined in \eqref{def:ri}, for every channel $i\in N$.
\label{def:ShapleyRep}
\end{definition}

\begin{proposition}
Let $A=( N, P(N), f)$ be an attribution problem, and let $r_i\geq 1$ be the maximum number defined in \eqref{def:ri}, for every channel $i\in N$. Then, it holds:
\begin{equation}
   \phi_i^r(N,P(N),f)=\sum_{\substack{S^r\subseteq N^r \\ S^r\cap S_i^r\neq \emptyset}}  \frac{|S^r\cap S_i^r|}{|S^r|} f^r(S^r)=
    \sum_{\substack{p\in P(N) \\ i\in p}}  \frac{n_i(p)}{\ell_p} f(p),
\label{eq:ShapleyRep}
\end{equation}
where $S_i^r=\{i^1, i^2,...,i^{r_i}\}$, $n_i(p)$ is the number of times channel $i$ appears in path $p$, being $\ell_p$ its length, $i=1,\dots,n$.
\end{proposition}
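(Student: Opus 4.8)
The plan is to establish the two equalities in \eqref{eq:ShapleyRep} in turn, each by a routine interchange of finite sums.

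\emph{First equality.} I would apply to the extended sum game $(N^r,v^r_{\Sigma})$ the dividend expression for the Shapley value of a sum game. By Proposition~\ref{dividends_sumgame} the Harsanyi dividends of $(N^r,v^r_{\Sigma})$ are $d_{S^r}=f^r(S^r)$, so by \eqref{ShapleyDiv} (equivalently, directly by \eqref{Shapley_dividends}) every fictitious player satisfies $\phi_{i^k}(N^r,v^r_{\Sigma})=\sum_{S^r\ni i^k} f^r(S^r)/|S^r|$ for $k=1,\dots,r_i$. Summing these $r_i$ identities as prescribed in Definition~\ref{def:ShapleyRep} and swapping the order of the (finite) summations, a subset $S^r\subseteq N^r$ shows up in the resulting sum precisely when it contains at least one of $i^1,\dots,i^{r_i}$, that is, when $S^r\cap S_i^r\neq\emptyset$, and then it carries coefficient $|S^r\cap S_i^r|/|S^r|$, since $|S^r\cap S_i^r|$ counts exactly the indices $k$ with $i^k\in S^r$. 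This is the first expression in \eqref{eq:ShapleyRep}.

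\emph{Second equality.} I would unfold the definition \eqref{def:KPIrep} of $f^r$. The key point is that, under the relabelling of the occurrences of a channel $j$ in a path by the fictitious copies $j^1,\dots,j^{n_j(p)}$ used to build $N^r$, each observed path $p\in P(N)$ corresponds to exactly one combination of fictitious channels, namely $S^r(p)=\bigcup_{j\in N}\{j^1,\dots,j^{n_j(p)}\}$, and $p$ contributes its benefit $f(p)$ only to the term $f^r\bigl(S^r(p)\bigr)$. For this combination $|S^r(p)\cap S_j^r|=n_j(p)$ for every $j\in N$, so in particular $|S^r(p)\cap S_i^r|=n_i(p)$, and $|S^r(p)|=\sum_{j\in N}|S^r(p)\cap S_j^r|=\sum_{j\in N}n_j(p)=\ell_p$. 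Substituting $f^r(S^r)=\sum_{p\in P(N):\,S^r(p)=S^r}f(p)$ into the middle member of \eqref{eq:ShapleyRep} and interchanging the sums over $S^r$ and over $p$, the double sum collapses to $\sum_{p\in P(N)}\tfrac{n_i(p)}{\ell_p}f(p)$; the paths with $i\notin p$ satisfy $n_i(p)=0$ and drop out, leaving the right-hand side.

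Both steps are mere reindexing of finite sums, so the only genuinely delicate point is the bookkeeping behind the second step: one has to fix, once and for all, the canonical way in which the occurrences of a repeated channel $j$ are renamed by fictitious players — the initial block $j^1,\dots,j^{n_j(p)}$ of $S_j^r$ — so that the condition ``$n_j(p)=|S^r\cap S_j^r|$ for all $j$'' in \eqref{def:KPIrep} picks out the single combination $S^r(p)$ and the assignment $p\mapsto S^r(p)$ becomes well defined. Once this convention is in place, consistently with the construction of $N^r$ and $f^r$ (and with the worked example), the two identities are immediate.
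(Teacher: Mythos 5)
Your proof is correct and follows essentially the same route as the paper's (which merely cites expression \eqref{Shapley_dividends} for the first equality and definition \eqref{def:KPIrep} for the second): apply the dividend formula to each fictitious player, sum over $k=1,\dots,r_i$, and interchange sums. Your closing remark about fixing the canonical relabelling $p\mapsto S^r(p)$ by initial blocks $j^1,\dots,j^{n_j(p)}$ is a genuine and welcome clarification, since a literal reading of \eqref{def:KPIrep} would credit a path to several combinations $S^r$ (e.g.\ to both $\{1^1,2^1\}$ and $\{1^1,2^2\}$ in the paper's Table~\ref{DataRepKv}) and would break the second equality; the convention you adopt is the one consistent with the worked example.
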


\begin{proof}
Since $\phi_i^r(N,P(N),f)$ is by definition the sum:
$$
 \phi_{i^1} (N^r,v^r_{\Sigma})+\cdots +\phi_{i^{r_i}}(N^r,v^r_{\Sigma}),
$$
for all $i\in N$, the first equality follows straightforward from expression \eqref{Shapley_dividends} of the Shapley value of a sum game in terms of its Harsanyi dividends and the extended game $(N^r,v^r_{\Sigma})$ definition.

Second equality follows from the definition of the extended channel set $N^r$ and KPI function $f^r(\cdot)$ expression \eqref{def:KPIrep}.
\end{proof}

Le us now define the property of {\em monotonicity with respect to channel repetition} to describe those rules verifying that {\it ceteris paribus}   the repetition of a channel favours it.

\begin{definition}
Let $A=(N,P(N),f)$ and $A^{+i}=(N,P^{+i}(N), f^{+i})$ be two attribution problems such that there exists a path $p\in P_i(N)$ with $P^{+i}(N)=P(N)\setminus \{ p\} \cup p^{+i}$, where $p^{+i}$ substitutes the path $p$ by repeating once channel $i\in p$, but without changing its value, i.e.,  $f^{+i}(p^{+i})=f(p)$, and being $f^{+i} (q)=f(q)$ for every  $q\in P(N)\setminus \{ p\}$. Then an attribution rule $\psi$ verifies  {\bf monotonicity with respect to channel repetition} whenever $\psi_i(N,P(N),f)\leq \psi_i(N,P^{+i}(N), f^{+i})$.
\end{definition}

\begin{proposition}
The Shapley value-like attribution rule introduced in definition \ref{def:ShapleyRep} verifies monotonicity with respect to channel repetition.
\end{proposition}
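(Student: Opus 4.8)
The plan is to use the closed form for the Shapley value-like attribution established in the preceding proposition, namely
$$\phi_i^r(N,P(N),f)=\sum_{\substack{p\in P(N)\\ i\in p}}\frac{n_i(p)}{\ell_p}\,f(p),$$
which holds for any attribution problem and any channel, and in particular is insensitive to whether the repetition bounds $r_j$ change when the paths are modified. Since the real channel set $N$ is the same in $A$ and $A^{+i}$, both $\phi_i^r(N,P(N),f)$ and $\phi_i^r(N,P^{+i}(N),f^{+i})$ are well defined and given by this sum, so the whole argument reduces to comparing two finite sums of nonnegative terms.

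Next I would isolate the single path that is altered. By hypothesis $P^{+i}(N)$ coincides with $P(N)$ except that $p$ is replaced by $p^{+i}$, and $f^{+i}$ agrees with $f$ on every unchanged path while $f^{+i}(p^{+i})=f(p)$. Hence in the difference $\phi_i^r(N,P^{+i}(N),f^{+i})-\phi_i^r(N,P(N),f)$ all the terms indexed by paths $q\in P(N)\setminus\{p\}$ cancel, and what remains is exactly the contribution of $p^{+i}$ minus that of $p$. Using $n_i(p^{+i})=n_i(p)+1$ and $\ell_{p^{+i}}=\ell_p+1$, this is
$$\phi_i^r(N,P^{+i}(N),f^{+i})-\phi_i^r(N,P(N),f)=\left(\frac{n_i(p)+1}{\ell_p+1}-\frac{n_i(p)}{\ell_p}\right)f(p).$$

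Finally I would check the sign. Clearing denominators, $\frac{n_i(p)+1}{\ell_p+1}\ge\frac{n_i(p)}{\ell_p}$ is equivalent to $\ell_p\,(n_i(p)+1)\ge n_i(p)\,(\ell_p+1)$, i.e.\ to $\ell_p\ge n_i(p)$, which always holds since channel $i$ occupies at most $\ell_p$ of the $\ell_p$ positions of $p$. Combined with $f(p)\ge 0$, this makes the displayed difference nonnegative, i.e.\ $\phi_i^r(N,P(N),f)\le\phi_i^r(N,P^{+i}(N),f^{+i})$, which is precisely monotonicity with respect to channel repetition.

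There is no real obstacle here; the only points deserving a word of care are that the path-based formula stays applicable even though inserting an extra copy of $i$ may raise $r_i$ by one, and the degenerate case $\ell_p=n_i(p)$ (a path consisting only of copies of $i$), where the bracketed quantity vanishes and the inequality holds with equality. I would add, as a consistency remark, that for a channel $j\neq i$ appearing in $p$ the corresponding term changes from $\tfrac{n_j(p)}{\ell_p}f(p)$ to $\tfrac{n_j(p)}{\ell_p+1}f(p)$, so repetition of $i$ may hurt the other channels in $p$; this is compatible with efficiency, since $\sum_{j}n_j(p)=\ell_p$ forces the total credit attached to $p$ to remain $f(p)$ before and after the modification.
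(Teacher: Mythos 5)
Your proof is correct and follows essentially the same route as the paper's: both reduce the claim via the closed-form expression \eqref{eq:ShapleyRep} to comparing the weights $\tfrac{n_i(p)}{\ell_p}$ and $\tfrac{n_i(p)+1}{\ell_p+1}$ on the single altered path, and both conclude from $\ell_p\geq n_i(p)$ and $f(p)\geq 0$. Your additional remarks (the equality case $\ell_p=n_i(p)$ and the compensating effect on other channels in $p$) are accurate but not needed for the argument.
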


\begin{proof}
Taking into account the last expression in \eqref{eq:ShapleyRep}, the unique difference between $\phi_i(N,P(N),f)$ and  $\phi_i(N,P^{+i}(N), f^{+i})$ is given by the weight corresponding to paths $p\in P(N)$ and $p^{+i}\in P^{+i}(N)$ which are
 $\frac{n_i(p)}{\ell_p}$ and $\frac{n_i(p)+1}{\ell_p+1}$, respectively. Since  $\ell_p\geq n_i(p)$, the inequality holds.
\end{proof}

We end up the analysis of this case by proving that the proposed attribution is a Shapley-like attribution in the sense that it preserves the essential properties  characterizing the Shapley value: efficiency, additivity, symmetry and null player. Previously, we must clarify the meaning of {\em symmetric} channels in this generalized context.

Let $A=(N,P(N),f)$ be an attribution problem with repetitions, then two different channels $i\neq j\in N$ are {\em symmetric} if $r_i=r_j$ and $f((p,\overbrace{i,\dots,i}^{r}))=f((p,\overbrace{j,\dots,j}^{r}))$ for all path $p$ and $1\leq r\leq r_i$. Note that in principle this equality must hold for an infinite number of paths. However, only a finite amount of conditions regarding observed paths are not trivial identities $0=0$. In fact, $i,j$ symmetric implies that  $(p,{\overbrace{i,\dots,i}^{r})}\in P(N) \, \Leftrightarrow \, (p,\overbrace{j,\dots,j}^{r})\in P(N)$.

\begin{proposition}
The Shapley value-like attribution verifies the properties of efficiency, additivity, symmetry and null player.
\label{th:properties_ShapleyLike}
\end{proposition}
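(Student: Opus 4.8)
The plan is to read off all four properties directly from the closed-form expression \eqref{eq:ShapleyRep} established in the previous proposition,
\[
\phi_i^r(N,P(N),f)=\sum_{\substack{p\in P(N)\\ i\in p}}\frac{n_i(p)}{\ell_p}\,f(p),\qquad i\in N,
\]
so that each axiom becomes an elementary manipulation of this sum rather than a computation inside the extended game $(N^r,v^r_\Sigma)$. (Efficiency and additivity could alternatively be inherited from the corresponding properties of the Shapley value on $(N^r,v^r_\Sigma)$ together with the fact that $f^r$, and hence $v^r_\Sigma$, is linear in $f$; but the formula above is shorter.)

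For \textbf{efficiency} I would sum the formula over $i\in N$, swap the order of summation, and use the identity $\sum_{i\in p}n_i(p)=\ell_p$ valid for every path; this yields $\sum_{i\in N}\phi_i^r(N,P(N),f)=\sum_{p\in P(N)}f(p)=B$, the total benefit of the campaign (equivalently, $v^r_\Sigma(N^r)$). For \textbf{additivity} the point is that $r_i$ --- and therefore $N^r$, the blocks $S_i^r$, and every weight $n_i(p)/\ell_p$ --- depends only on the pair $(N,P(N))$ and not on the KPI; hence, for a fixed observed path set, $f\mapsto\phi_i^r(N,P(N),f)$ is linear, which is exactly additivity with respect to the KPI. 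For a \textbf{null channel} $i$ --- meaning $f(p)=0$ for every observed path $p$ containing $i$, the natural transcription to this setting of the null-player condition $f(S\cup i)=0$ --- every summand defining $\phi_i^r$ vanishes, so $\phi_i^r(N,P(N),f)=0$.

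The one property requiring real work is \textbf{symmetry}. Given symmetric channels $i\neq j$, I would introduce the transposition $\tau=\tau_{ij}$ acting on paths by interchanging every occurrence of $i$ with $j$ and vice versa. It is an involution satisfying $\ell_{\tau(p)}=\ell_p$ and $n_i(\tau(p))=n_j(p)$, and one checks that it restricts to a bijection from $\{p\in P(N):i\in p\}$ onto $\{p\in P(N):j\in p\}$. Granting moreover that $f\circ\tau=f$ on all paths, the change of variable $q=\tau(p)$ in the defining sum gives
\[
\phi_i^r(N,P(N),f)=\sum_{\substack{p\in P(N)\\ i\in p}}\frac{n_i(p)}{\ell_p}f(p)=\sum_{\substack{q\in P(N)\\ j\in q}}\frac{n_j(q)}{\ell_q}f(q)=\phi_j^r(N,P(N),f),
\]
as desired. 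The main obstacle is precisely the two claims ``$\tau$ preserves $P(N)$'' and ``$f\circ\tau=f$'': the symmetry hypothesis is phrased only for paths differing by appended blocks of repeated $i$'s versus $j$'s (the reformulation noted just before the statement covers the support part only for that special shape), so one must argue that this already propagates to arbitrary occurrences of $i$ and $j$ inside a path --- most cleanly by an induction that builds an arbitrary path one channel at a time --- or else adopt invariance of $f$ under $\tau$ as the operative meaning of ``symmetric channels''. Once that identity is secured, the rest of the symmetry argument, and all of efficiency, additivity and the null-channel property, are routine.
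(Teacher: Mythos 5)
Your route is genuinely different from the paper's. The paper works inside the extended game $(N^r,v^r_\Sigma)$: efficiency is obtained from efficiency of the Shapley value together with $v^r_\Sigma(N^r)=\sum_{S^r\subseteq N^r}f^r(S^r)=B$; additivity requires building a common extended player set $N^r_{12}$ for the two campaigns (since $r^1_i$ and $r^2_i$ may differ) and then invoking the \emph{Null player out} property of Derks and Haller to reconcile the values computed on $N^r_1$, $N^r_2$ and $N^r_{12}$; symmetry and null player are dispatched with ``a similar reasoning''. You instead read everything off the path formula \eqref{eq:ShapleyRep}. For efficiency, interchanging the sums and using $\sum_{i\in p}n_i(p)=\ell_p$ is correct and is essentially the paper's computation in different clothing. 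For additivity your observation buys a real simplification: the weight $n_i(p)/\ell_p$ depends only on the path $p$ itself, not on $r_i$ or on $P(N)$, so after extending $f^1,f^2$ by zero to $P^1(N)\cup P^2(N)$ (zero-KPI paths contribute nothing to the sum) the map $f\mapsto\phi^r_i$ is plainly linear, and the whole $N^r_{12}$/null-player-out machinery becomes unnecessary. The null-channel argument is likewise correct and immediate once the null condition is transcribed, as you do, to ``$f(p)=0$ for every observed path containing $i$''.

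Symmetry is where the real issue lies, and you have put your finger on it. The paper's definition of symmetric channels only constrains $f$ on pairs of paths of the form $(p,i,\dots,i)$ versus $(p,j,\dots,j)$, i.e., with the repeated block appended at the end; it says nothing about occurrences of $i$ and $j$ in interior or initial positions. That condition does \emph{not} propagate to invariance of $f$ under the transposition $\tau_{ij}$: take $N=\{1,2,3\}$, $P(N)=\{(1,3),(2,3)\}$, $f((1,3))=10$, $f((2,3))=5$; then $r_1=r_2=1$ and every constraint $f((p,1))=f((p,2))$ reads $0=0$ because no observed path ends in $1$ or $2$, so channels $1$ and $2$ are ``symmetric'' in the paper's sense, yet $\phi^r_1=5\neq 5/2=\phi^r_2$. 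Hence the induction you tentatively propose cannot succeed, and the only viable repair is your second alternative: take $f\circ\tau_{ij}=f$ on all paths (with $i$ and $j$ interchanged in arbitrary positions) as the operative meaning of symmetric channels, under which your change-of-variables argument closes the proof. This is a defect of the paper's definition rather than of your argument, but as written your symmetry step is conditional on that stronger hypothesis and should be stated as such.
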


\begin{proof}
Efficiency follows from the definition of the extended game $(N^r,v^r_{\Sigma})$ and efficiency of the Shapley value, since:
$$
\sum_{i=1}^n \phi_i(N,P(N),f)= \sum_{i=1}^n \sum_{\ell=1}^{r_i}\phi_{i^{\ell}}(N^r,v^r_{\Sigma})=v^r_{\Sigma} (N^r)=\sum_{S^r\subseteq N^r} f^r(S^r)=B,
$$
being $B$ the global benefit of the campaign.

Symmetry and null player can be proved following a similar reasoning.

In order to prove additivity, let $(N,P^1(N),f_1)$ and $(N,P^2(N),f_2)$ be two campaigns developed for the same set of channels $N$, then we must first consider a new extended player set $N_{12}^r:=N_{1}^r\cup N_{2}^r$ and consider $r^{1,2}_i:=\max \{r_i^1,r_i^2\}$ to account for all possible repetitions in both campaigns. Clearly, if $r_i^1<r_i^2$ then all paths $p$ in which channel $i$ appears $\ell > r_i^1$ tiems shall have a zero KPI in the first campaign, and therefore the corresponding fictitious players $i^{\ell}$ with $r_i^1<\ell\leq r_i^2$ will be {\em null players} in $(N^{r}_{12},v^r_{1,\Sigma})$. However, since the Shapley value verifies  {\em Null player out} property\footnote{Removing a null player does not affect the Shapley value of the remaining players: $\phi_i(N,v)=\phi_i(N\setminus j, v_{-j})$, for all $i,j\in N$, $v\in {\cal G}_N$, such that $j$ is a null-player in $(N,v)$ and $i\neq j$.} (Derks and Haller, 1999), it holds for all $i\in N$:
\begin{align*}
   \phi_{i^{\ell}}(N^{r}_{12},v^r_{1,\Sigma}) & = \phi_{i^{\ell}}(N^{r}_{1},v^r_{1,\Sigma}), \; \forall \, \ell \leq r_i^1,
   \\
   \phi_{i^{\ell}}(N^{r}_{12},v^r_{1,\Sigma}) &  = 0, \; \forall \, r_i^1 < \ell\leq  r_i^{1,2},
  \\
    \phi_{i^{\ell}}(N^{r}_{12},v^r_{2,\Sigma}) & = \phi_{i^{\ell}}(N^{r}_{2},v^r_{2,\Sigma}), \; \forall \, \ell \leq r_i^2,
    \\
   \phi_{i^{\ell}}(N^{r}_{12},v^r_{2,\Sigma})  & = 0, \; \forall \, r_i^2 < \ell\leq  r_i^{1,2}.
\end{align*}
Thus, by definition of Shapley value-like attribution and additivity of the Shapley value, it follows that $\phi_i(N,P^1(N), f^1) + \phi_i(N,P^2(N), f^2)$ equals:
\begin{align*}
  \displaystyle \sum_{\ell=1}^{r_i^1} \phi_{i^{\ell}}(N^{r}_{1},v^r_{1,\Sigma}) +  \sum_{\ell=1}^{r_i^2} \phi_{i^{\ell}}(N^{r}_{2},v^r_{2,\Sigma})  =
    \displaystyle  \sum_{\ell=1}^{r_i^{1,2}} \phi_{i^{\ell}}(N^{r}_{12},v^r_{1,\Sigma}) +  \sum_{\ell=1}^{r_i^{1,2}} \phi_{i^{\ell}}(N^{r}_{12},v^r_{2,\Sigma})  =
  \\[3mm]
 \displaystyle
 \sum_{\ell=1}^{r_i^{1,2}} \phi_{i^{\ell}}(N^{r}_{12},v^r_{1,\Sigma}+v^r_{2,\Sigma}):=\phi_i(N,P_1(N)\cup P_2(N), f^1+f^2),
\end{align*}
for every channel $i\in N$, where $(N,P_1(N)\cup P_2(N), f^1+f^2)$ is precisely the attribution problem describing the combination of campaigns 1 and 2.
\end{proof}

\subsubsection{Order relevant case}
\label{Order}

We consider now the case in which a channel can play different roles at different stages of the conversion process, i.e. a channel can have different impacts on users' decision making at different stages of the path to conversion, so it would be very useful if we could understand the role that each channel can play in each step of the conversion process and evaluate correctly the attribution values taking into account the order in which it appears in each path to conversion.


For taking into account the order, again we make use of the idea of artificially replicating players. In this case, every channel $i$ that appears in position $j$ in some path to conversion $p\in P(N)$ is substituted by a new fictitious channel $i_j$ that combines the information about channel and position. For this new set of extended order players $N^o$, the order is again considered irrelevant and the KPI function and the extended order sum game $(N^o, v^o_{\Sigma})$ are defined in the same way as in section \ref{No_Order}. Now, the extended order set is given by $N^o=\cup_{i=1}^n S^o_i$, where $S^o_i=\{ i_1,\dots, i_{p_i}\}$, being $p_i$ the maximum position that channel $i\in N$ reaches in the set of paths to conversion to which it belongs.

\begin{example}
Let us consider the attribution problem described by the data in Table \ref{DataOrder}.
%
\begin{table}[h]
\caption{Data for order case}
\label{DataOrder}       
\begin{tabular}{lc}
\hline\noalign{\smallskip}
Path $p\in P(N)$ & KPI value $f(p)$  \\
\noalign{\smallskip}\hline\noalign{\smallskip}
$(1)$ & 30  \\
$(1,2)$ & 60  \\
$(2, 1)$ & 10  \\
\noalign{\smallskip}\hline
\end{tabular}
\end{table}

%
The data for the extended order players, the corresponding values for combinations and coalitions are depicted in Table \ref{DataOrderKv}. Note that $N^o=\{1_1,1_2,2_1,2_2\}$ and not all possible combinations and coalitions are considered. $f^o(S)=0$ for every combination $S$ not included in Table \ref{DataOrderKv}.
\begin{table}[h]
\caption{KPI and characteristic functions for the extended ordered set $N^o$}
\label{DataOrderKv}       
\begin{tabular}{lcc}
\hline\noalign{\smallskip}
$S^o\in N^o$ & $f^o(S)$ & $v^o_{\Sigma}(S)$  \\
\noalign{\smallskip}\hline\noalign{\smallskip}
$\{ 1_1 \}$ & 30  & 30\\
$\{ 2_1\}$ & 0 & 0  \\
$\{ 2_2\}$ & 0 & 0  \\
$\{ 1_1, 2_1\}$ & 0 & 30  \\
$\{ 1_1,2_2\}$ & 60 & 90 \\
$\{ 1_2, 2_1\}$ & 10 & 10  \\
$\{ 1_1,2_1,2_2\}$ & 0 & 90 \\
$\{ 1_1,1_2,2_1,2_2\}$ & 0 & 100 \\
\noalign{\smallskip}\hline
\end{tabular}
\end{table}

The Shapley value of the extended order game $(N^o,v^0_{\Sigma})$ is given by $\phi_{1_1}=60$, $\phi_{1_2}=5$, $\phi_{2_1}=5$, and  $\phi_{2_2}=30$. We can observe the following relation with the Shapley value of the sum game when the order is not considered relevant:
\begin{eqnarray}
\phi_{1_1} (N^o,v^0_{\Sigma})+\phi_{1_2} (N^o,v^0_{\Sigma})=65=\phi_1 (N,v_{\Sigma}), \label{relation_order_non1}
\\
\phi_{2_1} (N^o,v^0_{\Sigma})+\phi_{2_2} (N^o,v^0_{\Sigma})=35=\phi_2(N,v_{\Sigma}).
  \label{relation_order_non2}
\end{eqnarray}
Then, we can interpret the attribution to channels $1$ and $2$ as the sum of the attribution obtained by each channel when it occupies the first position or the second position in a path. In this particular case, channel $1$ contributes much more when it is the first touch-point to conversion, whereas channel $2$ contributes much more when it is the last touch-point.
\end{example}

For convenience, in what follows we shall denote the Shapley value of channel $i$ in position $j$ in the extended order game $\phi_{i_j}(N^o,v^o_{\Sigma})$ by $\phi_i^j (N^o,v^o_{\Sigma})$, which can be obtained by means of the following simplified expression \eqref{shapley_order_position}.

\begin{proposition}
For any attribution problem $A=( N, P(N),f)$  it holds:
\begin{equation}
    \phi_i^j(N^o,v^o_{\Sigma})=\sum_{p\in P_i^j(N)} \frac{f(p)}{\ell_p}
    \label{shapley_order_position}
\end{equation}
where $P_i^j(N)\subseteq P(N)$ is the set of paths in which player $i$ occupies position $j$.
\label{prop:shapley_order_position}
\end{proposition}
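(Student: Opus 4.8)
The plan is to reduce the statement to the dividend formula \eqref{Shapley_dividends} already established for sum games, once the extended order KPI function $f^o$ has been written explicitly in terms of the observed paths. First I would record that $(N^o,v^o_{\Sigma})$ is, by construction, itself a sum game on the player set $N^o$ built from the set function $f^o$; hence Proposition \ref{dividends_sumgame} gives $d_{S^o}=f^o(S^o)$ for every $S^o\subseteq N^o$, and expression \eqref{Shapley_dividends} yields
$$
\phi_i^j(N^o,v^o_{\Sigma})=\phi_{i_j}(N^o,v^o_{\Sigma})=\sum_{\substack{S^o\subseteq N^o \\ i_j\in S^o}}\frac{f^o(S^o)}{|S^o|},
$$
where the first equality is just the notational convention introduced before the statement.

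Next I would make precise what "the order is again considered irrelevant" means for the construction of $f^o$. To each observed path $p=(p_{(1)},\dots,p_{(\ell_p)})$ associate the position-labelled set of fictitious channels $S^o_p:=\{(p_{(1)})_1,(p_{(2)})_2,\dots,(p_{(\ell_p)})_{\ell_p}\}\subseteq N^o$, so that, exactly as in section \ref{No_Order}, $f^o(S^o)=\sum_{p\in P(N)\,:\,S^o_p=S^o}f(p)$, with $f^o(S^o)=0$ for every $S^o$ not arising this way (in particular for "inconsistent" combinations containing two fictitious channels with the same position subscript). The two observations that do the work are: (i) $|S^o_p|=\ell_p$ for every path $p$, since distinct positions produce distinct subscripts even when a channel repeats in $p$; and (ii) $i_j\in S^o_p$ if and only if channel $i$ occupies position $j$ in $p$, i.e. $p\in P_i^j(N)$. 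Substituting the expression for $f^o$ into the displayed sum and exchanging the order of summation, each path $p$ contributes to exactly one term, namely $S^o=S^o_p$, with weight $1/|S^o_p|=1/\ell_p$, and only the paths with $i_j\in S^o_p$, that is $p\in P_i^j(N)$, survive; this gives $\sum_{p\in P_i^j(N)}f(p)/\ell_p$, as claimed.

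I do not expect a genuine obstacle here beyond careful bookkeeping: the only point requiring attention is the identification of $f^o$ — spelling out that "defined in the same way as in section \ref{No_Order}" means aggregating the KPI over precisely those observed paths whose position-labelled channel set equals the given combination $S^o$, and checking that $|S^o_p|=\ell_p$ so that the dividend weight $1/|S^o|$ is exactly $1/\ell_p$. A sanity check against Table \ref{DataOrderKv} confirms the formula; for instance $\phi_1^1=f((1))/1+f((1,2))/2=30+30=60$ and $\phi_2^2=f((1,2))/2=30$.
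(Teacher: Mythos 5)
Your proposal is correct and follows essentially the same route as the paper: both apply the dividend expression \eqref{Shapley_dividends} to the extended sum game $(N^o,v^o_{\Sigma})$ and then identify the nonzero combinations $S^o$ containing $i_j$ with the paths in $P_i^j(N)$. You merely spell out the bookkeeping that the paper leaves implicit (the explicit definition of $f^o$ via the position-labelled set $S^o_p$, the identity $|S^o_p|=\ell_p$, and the exchange of summation), which is a useful clarification but not a different argument.
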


\begin{proof}
Follows straightforward from expression \eqref{ShapleyDiv} of the Shapley value of a sum game in terms of the KPI function taking into account that for all $S^o\in N^o$ such that $i_j\in S^o$ it exists a path $p'\in P(N)$ in which player $i$ occupies position $j$. Thus, for all $i_j\in N^o$ holds:
$$
\phi_i^j(N^o,v^o_{\Sigma}) =  \sum_{\substack{ S^o \subseteq N^o \\ i_j\in S^o}} \frac{f^o(S^o)}{\vert S^o\vert}=\sum_{p\in P_i^j(N)} \frac{f(p)}{\ell_p} .
$$
\end{proof}

If no repetition occurs and thus each channel appears only once in every observed path to conversion $p\in P(N)$, next proposition shows that relations \eqref{relation_order_non1} and \eqref{relation_order_non2} generalize. Then, we can always interpret the Shapley attribution to each channel when ignoring positions as the sum of the different Shapley attributions obtained by that channel when it occupies different positions in a path.

\begin{proposition}
Let $A=( N, P(N),f)$ be an attribution problem such that every channel $i\in N$ appears at most one time in each path $p\in P(N)$. Then, for any $i\in N$, $\phi_i(N,v_{\Sigma})=\sum_{j=1}^{p_i} \phi_i^j (N^o,v^o_{\Sigma})$, where $p_i$ is the maximum position that $i$ reaches in the set of paths it belongs, $P_i\subseteq P(N)$.
\label{decom_order_Shapley}
\end{proposition}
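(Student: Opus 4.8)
The plan is to rewrite both sides as sums indexed by observed paths and then observe that the no-repetition hypothesis turns one sum into the other by a partition argument.

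\emph{Step 1: a path formula for the left-hand side.} Start from expression \eqref{Shapley_dividends}, $\phi_i(N,v_{\Sigma})=\sum_{S\subseteq N,\,i\in S} f(S)/|S|$, and substitute the aggregated KPI \eqref{eq:AggKPI}, $f(S)=\sum_{p\in P(N),\,S_p=S} f(p)$. Interchanging the two sums, each path $p$ with $i\in p$ contributes $f(p)/|S_p|$. Here the hypothesis that every channel appears at most once in each path is used: it forces $|S_p|=\ell_p$ for every such $p$, so that $\phi_i(N,v_{\Sigma})=\sum_{p\in P_i(N)} f(p)/\ell_p$. (Equivalently, one may simply quote \eqref{eq:ShapleyRep} in the special case $r_i=1$ for all $i$, where $n_i(p)\in\{0,1\}$ and $N^r=N$, which gives the same identity at once.)

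\emph{Step 2: the right-hand side.} By Proposition \ref{prop:shapley_order_position}, $\phi_i^j(N^o,v^o_{\Sigma})=\sum_{p\in P_i^j(N)} f(p)/\ell_p$ for each position $j$, where $P_i^j(N)$ is the set of observed paths in which $i$ occupies position $j$. Summing over $j=1,\dots,p_i$ gives
\[
\sum_{j=1}^{p_i}\phi_i^j(N^o,v^o_{\Sigma})=\sum_{j=1}^{p_i}\ \sum_{p\in P_i^j(N)}\frac{f(p)}{\ell_p}.
\]

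\emph{Step 3: the partition argument and conclusion.} Because $i$ appears at most once in each path, every path $p\in P_i(N)$ places $i$ in exactly one position $j$; hence the family $\{P_i^j(N)\}_{j=1}^{p_i}$ is a partition of $P_i(N)$. Therefore the double sum in Step 2 collapses to $\sum_{p\in P_i(N)} f(p)/\ell_p$, which by Step 1 equals $\phi_i(N,v_{\Sigma})$, as claimed. The only real point of the argument is making sure the no-repetition hypothesis is invoked in both places it is needed — to identify $|S_p|$ with $\ell_p$ and to guarantee that the position sets $P_i^j(N)$ are disjoint and exhaust $P_i(N)$ — everything else is bookkeeping; no new estimate is involved.
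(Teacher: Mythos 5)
Your proposal is correct and follows essentially the same route as the paper: rewrite $\phi_i(N,v_{\Sigma})$ via \eqref{Shapley_dividends} and \eqref{eq:AggKPI} as $\sum_{p\in P_i(N)} f(p)/\ell_p$ using $|S_p|=\ell_p$, then split $P_i(N)$ by the position of $i$ and identify each piece via Proposition \ref{prop:shapley_order_position}. Your write-up is in fact slightly more careful than the paper's, since you explicitly note that the no-repetition hypothesis is needed a second time to make $\{P_i^j(N)\}_j$ a genuine partition of $P_i(N)$.
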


\begin{proof}
Since there are repetitions in any observed path to conversion, $\vert S_p\vert =\ell_p$, for all $p\in P(N)$, and therefore:
$$
\phi_i(N,v_{\Sigma})= \sum_{\substack{ S \subseteq N \\ i\in S}} \frac{f(S)}{\vert S\vert}=
 \sum_{p\in P_i(N)} \frac{f(p)}{\ell_p} =\sum_{j=1}^{p_i}  \sum_{p\in P_i^j(N)} \frac{f(p)}{\ell_p}.
$$
Thus, taking into account proposition \ref{prop:shapley_order_position} the decomposition result hods.
\end{proof}

 For the more general case, in which a channel appears more than one time in some paths, and repetition is also relevant, the approach of the previous section can be employed in combination with the one in this chapter, and we can define again a Shapley-value like attribution per each channel. However, the decomposition theorem \ref{decom_order_Shapley} is no longer true since also repetition effect over the attribution of each channel is measured.
	
	
It must be remarked that in Zhao et al. (2018) an intuitive idea about the use of a similar approach to deal with the case in which the order is relevant is considered, which they called ''ordered Shapley values''. However, they did not consider any formalization of the procedure to obtain these ordered Shapley values or its relation to the case that the order was not relevant.

Following Zhao et al. (2018) we can measure the importance of a given fixed position $j$ by means of the sum of the Shapley values of the ordered players $i_j\in N^o$ corresponding to this position. Formally:

\begin{definition}
For any attribution problem $A=( N, P(N),f)$, and any observed position $j$ (i.e., there exists a path $p\in P(N)$ of length $\ell\geq j$), the {\bf Shapley value-like attribution} of position $j$ is defined as:
\begin{equation*}
    \phi^j(N,P(N),f):=\sum_{i\in N} \phi_i^j (N^o,v^o_{\Sigma}).
\end{equation*}
\end{definition}

The Shapley value-like contribution of position $j$ defined above can be also obtained in terms of the KPI function as follows.

\begin{proposition}
For any attribution problem $A=( N, P(N),f)$, and any observed position $j$, it holds:
\begin{equation*}
 \phi^j(N,P(N),f)=\sum_{\substack{p\in P(N) \\ \ell_p\geq j}} \frac{f(p)}{\ell_p},
\end{equation*}
\end{proposition}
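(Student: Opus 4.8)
The plan is to reduce everything to Proposition \ref{prop:shapley_order_position} and a simple counting observation about paths. By definition, $\phi^j(N,P(N),f)=\sum_{i\in N}\phi_i^j(N^o,v^o_{\Sigma})$, so I would first substitute the closed formula from Proposition \ref{prop:shapley_order_position} into each summand, obtaining
\begin{equation*}
\phi^j(N,P(N),f)=\sum_{i\in N}\sum_{p\in P_i^j(N)}\frac{f(p)}{\ell_p}.
\end{equation*}

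The key step is then to recognize that the family $\{P_i^j(N)\}_{i\in N}$ is a partition of the set $\{p\in P(N)\,:\,\ell_p\geq j\}$. Indeed, a path $p$ has a well-defined channel $p_{(j)}\in N$ sitting in position $j$ precisely when $\ell_p\geq j$; that channel is unique, so $p$ belongs to exactly one $P_i^j(N)$, namely the one with $i=p_{(j)}$, and conversely every path appearing in some $P_i^j(N)$ has length at least $j$. Exchanging the order of summation (the double sum over $i\in N$ and over $p\in P_i^j(N)$ becomes a single sum over all $p$ with $\ell_p\geq j$) yields
\begin{equation*}
\phi^j(N,P(N),f)=\sum_{\substack{p\in P(N)\\ \ell_p\geq j}}\frac{f(p)}{\ell_p},
\end{equation*}
which is the claimed identity.

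There is no real obstacle here: the only thing to be mildly careful about is that some of the sets $P_i^j(N)$ may be empty (a channel need not ever occupy position $j$), but this does not affect the disjoint-union argument. Everything else is a direct rewriting using the already-established Proposition \ref{prop:shapley_order_position}, so the proof is a couple of lines.
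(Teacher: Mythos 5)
Your proposal is correct and follows essentially the same route as the paper: the paper's proof is the one-line observation that $\bigcup_{i\in N} P_i^j(N)=\{p\in P(N)\,:\,\ell_p\geq j\}$, combined implicitly with the closed formula for $\phi_i^j$ from the preceding proposition. Your version is slightly more careful in noting that the sets $P_i^j(N)$ are pairwise disjoint (which the disjoint-union/sum-exchange step genuinely requires), but this is a refinement of the same argument, not a different one.
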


\begin{proof}
Straightforward since $\displaystyle \bigcup_{i\in N} P_i^j(N)=\{ p\in P(N)\, /\, \ell_p\geq j\}$.
\end{proof}

Clearly, the indexes $\phi_i^j$, $j=1,\dots,p_i$, $i=1,\dots, n$ for sharing the benefits attributed to channel $i$ among the different positions it can occupy are Shapley value allocations and therefore trivially satisfy all properties listed in section \ref{No_Order}. But also the indexes $\phi^j$, for sharing the benefits produced among the different positions are Shapley-like allocations in the sense that they also verify the basic properties of the Sahpley value.

\section{A bankruptcy approach}
\label{Bank}

In this section we propose a different game theoretical approach to the attribution problem considering it as a bankruptcy problem. The bankruptcy problem was introduced as a game theoretical problem by Aumann and Maschler (1985) for solving the problem of how to allocate a given state among the different agents that have rights on part of it in the case in which the estate is not sufficient to meet all their claims.

Formally, the bankruptcy problem is given by $(E, c)$, where $E$ is the {\em estate} and $c=(c_1,\dots, c_n)$ is the {\em vector of claims}, being $c_i$ the claim of agent $i$, such that $0<E\leq \sum_{i=1}^{n} c_i$.
Let ${\cal U}=\{1,2,\dots\}$ be the {\em universe of claimants}, and let ${\cal N}$ be the class of all non-empty finite subsets of ${\cal U}$. For an  element $N$ in ${\cal N}$, let ${\cal B}_N$ denote the family of all those bankruptcy problems defined on $N$, and let ${\cal B}=\displaystyle\bigcup_{N\in {\cal N}} {\cal B}_N$ be the set of all bankruptcy problems.

Let $B=(E,c)\in {\cal B}_N$ be a given bankruptcy problem. We shall denote by $C=\sum_{i=1}^{n} c_i$ the total quantity that is claimed and by $D=C-E\geq 0$ the deficit. Then, a cooperative game $(N, v)$ can be associated to this problem with characteristic function:
\begin{equation}
v(S)=\max\{0, E-\sum_{i\notin { S}} c_i\},\quad \, S\subseteq N.
\label{bankruptcy-pesimistic-game}
\end{equation}

It is well known that this game is convex. Although the Shapley value for this game could be calculated, it does not have good properties as a bankruptcy solution. Instead, usually some other specific bankruptcy rules are employed.

A {\bf rule} for ${\cal B}^N$ is a mapping $R$ that associates with every problem $(E, c)\in {\cal B}^N$ a unique value $R(E,c)\in \mathbb{R}^n$.\footnote{Sometimes the set of involved players could vary, so that we must explicitly include it in the notation by writing $(N,E,c)$ instead of $(E,c)$, for instance.}

In particular, for the attribution problem we shall pay attention in this paper to the {\em Proportional} (PROP) and the {\em Constrained Equal Losses} (CEL) rules (Aumann and Maschler 1985). The PROP rule, which distributes the estate proportionally to the claims and is probably the most widely used solution rule in this framework, can provide us with a benchmark. The CEL rule, which has the property of excluding weaker claimants helps us to concentrate on the must powerful channels and excluding from the sharing-out those channels with low contributions.

\begin{definition}
The {\bf proportional rule} assigns to each agent a part of the estate proportional to its claim:
\[
PROP_i=\frac{c_i}{C}E, \quad i=1,\dots,n,
\]
where $C=\sum_{i\in N} c_i$, for every bankruptcy problem $B=(E,c)\in {\cal B}_N$.
\end{definition}

The {\bf CEL rule} follows an approach that tries to impute equally the deficit $D$ to the claimants:

\begin{definition}
The {\bf CEL (Constrained Equal Losses rule)} assigns to each agent in a bankruptcy problem $B=(E,c)\in {\cal B}_N$ the amount:
\[
CEL_i=\max \{0, c_i-\lambda\},
\]
\noindent where $\lambda >0$ verifies
\[
\sum_{i\in N} \max \{0, c_i-\lambda\}=E.
\]
\end{definition}


\subsection{The attribution problem as a bankruptcy problem}

We shall consider now that each channel claims (all) the benefits generated by all those conversions related to users that have seen the ad on that channel.

In the first place, we will consider the case in which we record only if the ad has been seen on a channel or not, without taking into account the order or the number of times it has been seen; afterwards, the relevant case of order and repetition will be considered.

\subsubsection{No order, no repetition}

Therefore, given an attribution problem $( N, P(N), f) $ we shall consider that the only relevant information is the KPI function $f$ defined on the combinations $S$ of $N$. We denote by $(F, c)$ the associated bankruptcy problem, where the estate is given by
\[
F=\sum_{S\in N)} f(S),
\]
\noindent i.e. is the total value produced by the set of channels $N$ in a particular campaign, and the individual claims are given by
\[
c_i=\sum_{\substack{S\in N\\  i\in S}} f(S),\quad i\in N,
\]
That is, each channel claims (all) the benefits generated by all those conversions of users that have seen the ad on that channel. Obviously, the sum of the claims  $C=\sum_{i \in N} c_i$ exceeds the global benefit $F$ (the estate), which is precisely the global amount that must be attributed to the channels.

We shall denote by $D\geq 0$ the deficit $D=C-F$. From the definition of $D$ we can deduce that:
\[
D=\sum_{\substack{S\subset N \\ S\neq \emptyset}}(|S|-1)f(S).
\]

The family of all those bankruptcy problems involving channels in $N$ is denoted by ${\cal F}^N$. As we will see in the next theorem ${\cal F}^N$ is a proper subset of ${\cal B}^N$, for every finite element $N$ in ${\cal N}$. Let ${\cal F}=\cup_{N\subseteq {\cal N}} {\cal F}^N$.

\begin{theorem}
For every $N\in {\cal N}$ and any given bankruptcy problem $B=(E, c)\in {\cal B}^N$,
\begin{equation}
\sum_{i \in N} c_i\geq E\geq \max_i c_i
\label{attr_cond}
\end{equation}
is a necessary and sufficient condition for the existence of a non negative set function $f:2^N\rightarrow \mathbb{R}_+$ such that
$$
E=\sum_{\substack{S\subset N \\ S\neq \emptyset}}f(S) \text{ and } c_i=\sum_{\substack{S\subseteq N\\  i\in S}} f(S), \; \forall \, i\in N.
$$
\end{theorem}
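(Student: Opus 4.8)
The plan is to prove the two implications separately: necessity is immediate from non\-negativity of $f$, and sufficiency will follow by exhibiting two explicit admissible set functions with extreme estates and then interpolating linearly between them.

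For necessity, suppose such an $f\ge 0$ exists. Because the sum defining each $c_i$ is a partial sum (over the coalitions containing $i$) of the sum defining $E$, and all summands are non\-negative, $c_i=\sum_{S\ni i}f(S)\le\sum_{\emptyset\neq S\subseteq N}f(S)=E$ for every $i\in N$, hence $E\ge\max_i c_i$; and interchanging the order of summation, $\sum_{i\in N}c_i=\sum_{\emptyset\neq S\subseteq N}|S|\,f(S)\ge\sum_{\emptyset\neq S\subseteq N}f(S)=E$ (this last inequality is in any case part of the definition of a bankruptcy problem). Thus \eqref{attr_cond} holds.

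For sufficiency, assume \eqref{attr_cond} and relabel the claimants so that $c_1\ge c_2\ge\cdots\ge c_n$, so that $\max_i c_i=c_1$. I would first record two non\-negative set functions that both produce the claim vector $c$ but opposite extreme estates: the \emph{atomistic} one $f^{A}(\{i\}):=c_i$ and $f^{A}(S):=0$ for $|S|\ge 2$, which yields estate $\sum_{i}c_i$; and the \emph{nested} one $f^{B}(\{1,\dots,k\}):=c_k-c_{k+1}$ for $k=1,\dots,n$ (with the convention $c_{n+1}:=0$) and $f^{B}(S):=0$ on all other coalitions, which is non\-negative precisely because $c$ is non\-increasing, has claim vector $c$ by the telescoping identity $\sum_{k\ge i}(c_k-c_{k+1})=c_i$, and yields estate $\sum_{k=1}^{n}(c_k-c_{k+1})=c_1=\max_i c_i$. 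Since the claim vector is an affine function of $f$ and the estate $\sum_{\emptyset\neq S\subseteq N}f(S)$ is linear in $f$, setting $f:=\theta f^{A}+(1-\theta)f^{B}$ with $\theta:=(E-c_1)/(\sum_i c_i-c_1)$ gives a non\-negative $f$ with claim vector $c$ and estate $\theta\sum_i c_i+(1-\theta)c_1=E$; here $\theta\in[0,1]$ exactly because $c_1\le E\le\sum_i c_i$ by \eqref{attr_cond}, and the degenerate case $\sum_i c_i=c_1$ forces $E=c_1$, where $f^{B}$ already does the job.

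The step I expect to be the only real obstacle is finding the right construction for the lower extreme $E=\max_i c_i$: the naive singleton construction only realizes $E=\sum_i c_i$, and one has to see that concentrating all the mass on a chain of nested coalitions is what pushes the estate down to $\max_i c_i$. Once both endpoint constructions are in hand, convexity of the set of admissible $f$ makes the interpolation automatic. (Equivalently, one could argue abstractly that the admissible pairs $(c,E)$ form the convex cone generated by the vectors $(\V{1}_S,1)$ over nonempty $S\subseteq N$, that for a fixed $c$ the admissible $f$ form a non\-empty compact polytope, and that the linear functional $f\mapsto\sum_{\emptyset\neq S\subseteq N}f(S)$ ranges over exactly $[\max_i c_i,\sum_i c_i]$ on it; but the two\-construction argument is more transparent and self\-contained.)
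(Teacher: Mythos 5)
Your proof is correct, and it takes a genuinely different route from the paper. The paper proves sufficiency by induction on the number of claimants: it orders the claims increasingly, peels off the largest claimant by setting $f(n+1):=\max\{c_{n+1}-c_n,\,E-\sum_{i=1}^{n}c_i\}$, and splits into two cases according to which term attains the maximum, reducing each to a smaller bankruptcy problem that still satisfies \eqref{attr_cond}. You instead give a direct, non-inductive construction: the atomistic solution $f^{A}$ realizing the maximal estate $\sum_i c_i$, the nested-chain solution $f^{B}$ realizing the minimal estate $\max_i c_i$ (with nonnegativity coming from the decreasing ordering and the claims recovered by telescoping), and a convex combination tuned to hit any prescribed $E$ in between; linearity of both the claim map and the estate map makes the interpolation immediate, and the degenerate case $\sum_i c_i=\max_i c_i$ is handled separately. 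Both arguments are valid (you both implicitly read the paper's sum $\sum_{S\subset N,\,S\neq\emptyset}f(S)$ as ranging over all nonempty subsets of $N$, which is clearly the intended meaning, since otherwise necessity would fail). Your approach buys brevity, avoids the two-case bookkeeping of the induction, and makes transparent that $[\max_i c_i,\sum_i c_i]$ is exactly the achievable range of estates for a fixed claim vector; the paper's induction, on the other hand, is the kind of argument that adapts more readily if one wants to impose additional structure on $f$ coalition by coalition. Either proof establishes in particular that the class ${\cal F}^N$ of attribution bankruptcy problems is a proper subclass of ${\cal B}^N$, which is the point the authors need downstream.
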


\begin{proof}
The necessity of the condition is straightforward. Sufficiency is proving by induction over the number of claimants.

Let us first prove that, under those conditions, the next set of linear constraints corresponding to a bankruptcy problem with two claimants, has a non-negative feasible solution.
\begin{align*}
E &= f(1)+f(2)+f(\{ 1,2\}),
\\
c_1 &= f(1) + f(\{ 1,2\}),
\\
c_2 &= f(2) + f(\{ 1,2\}).
\end{align*}
Trivially, $f(1)=E-c_2$, $f(2)=E-c_1$ and $f(\{ 1,2\})=c_1+c_2-E$ solves the system and condition \eqref{attr_cond} assures that all of them are non-negative.

Now, let us suppose by induction hypothesis that given a bankruptcy problem with $n=\vert N\vert$ claimants satisfying condition \eqref{attr_cond} there exists a non-negative set function $f(\cdot)$ such that $E=\sum_{\substack{S\subset N \\ S\neq \emptyset}}f(S)$   and  $c_i=\sum_{\substack{S\subseteq N\\  i\in S}} f(S)$, for all $i\in N$, and we will prove the existence of a similar function for any bankruptcy problem with $n+1$ claimants verifying \eqref{attr_cond}.

Let $B=(E,c)$ be such a bankruptcy problem, and let us assume without loss of generality $c_1\leq c_2\leq \cdots \leq c_n\leq c_{n+1}$. Then let us define
\begin{equation}
f(n+1):=\max \{ c_{n+1}-c_n, E-\displaystyle \sum_{i=1}^n c_i\}
\label{fmaxdef}
\end{equation}
Now, let us consider two cases:
\begin{enumerate}
\item If $f(n+1)=c_{n+1}-c_n$, then the reduced problem $B':=(E',(c_1,\dots,c_n))$ with $E':=E- c_{n+1}+c_n$ is a bankruptcy problem satisfying condition \eqref{attr_cond}. Note that $c_n=\max_{i=1,\dots,n} c_i \geq E-c_{n+1} + c_ n =E'$ since $E\geq c_{n+1}$, and $\displaystyle \sum_{i=1}^n c_i \geq E - c_{n+1} + c_n=E'$, since $c_{n+1}-c_n \geq E-\displaystyle \sum_{i=1}^n c_i$. Thus, there exists a non-negative function $f'(\cdot)$ such that:
\begin{align}
E -  c_{n+1} + c_n &= \displaystyle \sum_{S\subseteq N} f'(S),
\\
c_i &= \displaystyle \sum_{\substack{S\subseteq N \\ i\in S}} f'(S), \; i=1,\dots, n.
\label{equationBn}
\end{align}
Now, let us define function $f: 2^{N\cup \{n+1\}}\rightarrow \mathbb{R}_+$ as follows: $f(n+1):= c_{n+1}-c_n$; $f(S):=f'(S)$ and $f(S\cup \{ n,n+1\}):=f'(S\cup \{n\})$, for every $S\subseteq \{1,\dots, n-1\}$; and being $f(S):=0$ for the remaining $S$ containing only one of the agents $n$ or $n+1$.

Trivially, $f$ verifies
$$
E=\sum_{\substack{S\subset N \cup \{n+1\}\\ S\neq \emptyset}}f(S)
$$
and
$$
c_i=\sum_{\substack{S\subseteq N\cup\{ n+1\}\\  i\in S}} f(S), \text{ for all $i\in N\cup \{ n+1\}$}.
$$
\item
If $f(n+1)=E-\displaystyle \sum_{i=1}^n c_i$, then $0\leq c_{n+1}-c_n \leq E-\displaystyle \sum_{i=1}^n c_i$, and therefore the reduced problem $B':=(E',(c_1,\dots,c_n))$ with $E':=E- f(n+1)=\sum_{i\in N} c_i$ is a trivial bankruptcy problem which satisfies also condition \eqref{attr_cond}.

The function $f'(i):=c_i$ for all $i=1,\dots,n$, and $f'(S)=0$ otherwise, is a non-negative function that trivially verifies
$$
E':=\sum_{i=1}^n c_i=\sum_{\substack{S\subset N \\ S\neq \emptyset}}f'(S) \text{ and  } c_i=\sum_{\substack{S\subseteq N\\  i\in S}} f'(S), \; \forall \, i\in N=\{1,\dots,n\}.
$$
 Now, let us define the function $f$ as follows:
\begin{align*}
  f(n+1) & := E-\displaystyle \sum_{i=1}^n c_i,
  \\
  f(i) & :=f'(i)=c_i, \text{ $i=1,\dots,n-1$},
   \\
   f(n) & :=f'(n)-D=c_n-(\sum_{i=1}^{n+1}-E)\geq 0 \text{ (since $c_{n+1}-c_n\leq =E-\displaystyle \sum_{i=1}^n c_i$)},
   \\
   f(\{n,n+1\}) & :=D=\sum_{i=1}^{n+1}-E\geq 0,
   \\
   f(S) & :=0,  \text{ otherwise.}
   \end{align*}
 Trivially, $f$ verifies
 $$
 E=\sum_{\substack{S\subset N \cup \{n+1\}\\ S\neq \emptyset}}f(S) \text{ and  } c_i=\sum_{\substack{S\subseteq N\cup\{ n+1\}\\  i\in S}} f(S), \; \forall \, i\in N\cup \{ n+1\}.
 $$
\end{enumerate}

\end{proof}

In the sequel, let $A=(N,P(N),f)$ be an attribution problem, when we refer to its {\it Constrained Equal Loss (CEL)} or {\it Poportional (PROP)} attribution we were referring to the CEL and PROP, respectively, shares of the corresponding bankruptcy problem $B=(N,F,c)$ associated to $A$.

Since ${\cal F}^N$ is a proper subset of ${\cal B}^N$, for every finite element $N$ in ${\cal N}$, the classic axiomatic characterizations of Proportional and CEL rules are not valid when restricted to this problem. New characterizations must be developed for this subclass of bankruptcy problems. In fact, some of the key properties that characterize those solutions, such {\em consistency}, for instance, have no meaning in this framework. Let us analyze the classical properties for bankruptcy rules which are relevant\footnote{They appear in some characterization of these bankruptcy rules.} for the two rules we are interested on -PROP and CEL- in terms of attribution rules. We give an interpretation of the rule as an attribution mechanism when it makes sense and discuss its validity as a property in the subclass defined by attribution problems otherwise.

The following properties, which are essential for characterizing CEL and PROP, are no longer valid in the domain ${\cal F}$, since some of the involved bankruptcy problems involved do not belong to the class ${\cal F}$ of bankruptcy problems compatible with attribution situations. In particular, {\em path Independence} property is essential for characterize both rules.

\begin{itemize}
    \item {\bf Composition. } For all $N$, all $(E, c)\in {\cal B}^N$ and all $E_1, E_2\in \mathbb{R}_{++}$ such that $E_1+E_2=E$, $R(E, c)=R(E_1, c)+R(E_2, c-R(E_1, c))$.


\item {\bf Path Independence. } For all $(E, c)\in {\cal B}^N$ and for all $E'>E$, $R(E, c)=R(E, R(E', c))$.

\item {\bf Consistency.} For all $N$, all $(E, c)\in {\cal B}^N$, all $S\subset N$ and all $i\in S$ we have: $R_i(N, E, c)= R_i(S, \sum_{i\in S}R_i(N, E, c), c_S)$, where $c_S=(c_i)_{i\in S}$.


\item {\bf SDU Self-duality. } For all $N$, all $(E, c)\in {\cal B}^N$, $R(E, c)=c-R(D, c)$.
SDU introduces a principle of symmetry in the behaviour of the solution with respect to awards and losses, and it is a crucial property of PROP rule. It says that the same principle is to be applied if we think of $(E, c)$ either a distribution problem or as a rationing scheme.
\end{itemize}

On the contrary, all properties of the following list do make sense as properties for attribution rules. We give their interpretation in these terms.

\begin{itemize}

\item {\bf IND Individual rationality. } For all $N$, and all $(E, c)\in {\cal B}^N$, $0\leq R_i(E, c) \leq c_i$, for all $i\in N$, establishing that we do not attribute a channel less than zero and no more than the sum of the total values in which it has had some participation. That is, no channel will be attributed by any value generated by those users that were not exposed to the Ad on it.
\item {\bf EFF Efficiency. } For all $N$, and all $(E, c)\in {\cal B}^N$, $sum_{i\in N} R_i(E,c)=E$. That is, all the value produced is attributed to the channels.

\item {\bf ETE Equal treatment of equals. } For all $N$, $(E, c)\in {\cal E}^N$ and for all $i,j\in N$, $c_i=c_j$ implies $R_i(E, c)=R_j(E, c)$.

\noindent For all $(B, c)\in {\cal B}^N$ and for all $i,j\in N$, $f(S\cup i)=f(S\cup j), \forall S\subseteq N\backslash\{i, j\}$ implies $R_i(B, c)=R_j(B, c)$.

As a property of an attribution rule is stronger than symmetry. Note that two channels can have the same claims but they could be distinguishable because they can differ in the values of the combinations they belong.











\item {\bf EXC Exclusion. } For all $N$, and all $(E, c)\in {\cal B}^N$, if $c_i\leq D/n$ then $R_i(B, c)=0$. Thus, the attribution will concentrate on that channels with the highest values.



One cannot claim more than there is; the excess is irrelevant.



\item {\bf CMR Composition from minimal rights. } For all $N$, all $(E, c)\in {\cal B}^N$, $R(E, c)=m(E, c)+R(E-\sum_{i\in N} m_i(E, c), c-m(E, c))$, where $m_i(E, c)=\max \{0, E-\sum_{j\neq i}c_j\}$ is the minimal right of player $i$.

The minimal right of a channel represents the amount of the global benefit that is left to him when the claims of all other channels are fully satisfied provided this amount is non negative. And it is taken to be zero otherwise. CMR assures to each agent their minimal right before assigning the remaining benefit considering the new claims.

In this case, it is no so obvious that the derived bankruptcy problem $(E-\sum_{i\in N} m_i(E, c), c-m(E, c))\in {\cal F}^N$. Let us check it:

First, it is obvious that $\sum_{i\in N} (c_i-m_i(E, c))\geq E-\sum_{i\in N} m_i(E, c)$.

Second, we must prove that $\max_{i\in N} c_i-m_i(E, c)\leq E-\sum_{i\in N} m_i(E, c)$.
Assume, without loss of generality that claims are ordered in decreasing order, $c_1\geq c_2\geq \cdots \geq c_n$, therefore $c_1=\max_i c_i$. As
\[
\sum_{i\neq 1} c_i\leq \sum_{i\neq j} c_i, \quad \forall j\in N,
\]
this implies that $m_i$ is a non increasing function in $i$. Then, if there exists $m_i\neq 0$, $m_1\neq 0$.

If $m_j>0$, $c_j-m_j=c_j-E+\sum_{i\neq j} c_i= D-E$. If for a given $j$, $m_j=0$, we must check that $c_1-m_1\geq c_j$. Assume $m_1>0$ (otherwise is trivial), $c_1-m_1=D-E=\sum_{i\in N} c_i -E\geq c_j$ because $m_j=0$ implies that $E-\sum_{i\neq j}c_i\leq 0$.

Therefore, we must check only that $E-\sum_{i\in N} m_i\geq c_1- m_1$. Let $k$ be such that $m_i>0$ for all $i=1,..,k$, and $m_i=0$ for $i> k$. The cases $k=0, 1$ are trivial. For $k\geq 2$ some simple calculations allows us to obtain that:
$$
E-\sum_{i\in N} m_i= (k-1)D+ \sum_{j=k+1}^n c_j
$$
Then, $E-\sum_{i\in N} m_i -(c_1- m_1)= (k-1)D+ \sum_{j=k+1}^n c_j - D= (k-2)D+ \sum_{j=k+1}^n\geq 0$.

\end{itemize}

The two attribution rules we are considering, PROP and CEL, verify IND, EFF and ETE, whereas  EXC and CMR are only satisfied by CEL.

\subsection{The repetition case}

Now we consider the case in which we not only record the visited channels but the number of times a channel appears in a path. For dealing with this case, we consider again the introduction of artificial players as in section \ref{No_OrderRep}.

In this context, we shall verify for CEL and PROP solutions that {\it ceteris paribus} the repetition of a channel favors it. First, a result about their behaviour when irrelevant claimants are disregarded is needed.

Let us define {\bf IPL Irrelevant players property}. We say that a rule $R$ verifies IPL if, for any bankruptcy problem $(N, E, c)$, and any player $k\in N$ with $R_k(N,E,c)=0$, the solution $R'$ of the reduced problem $(N\setminus \{k\}, E, c_{-k})$ is such that $R'_i=R_i$,  for all $i\in N\setminus \{k\}$.

\begin{proposition} The CEL and PROP attribution rules verify irrelevant players property.
\label{CEL-PROP-IPL}
\end{proposition}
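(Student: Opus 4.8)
The plan is to handle the two rules separately, since each has a transparent closed-form description that makes the irrelevant-player claim almost immediate, the only subtlety being to check that removing a null-award claimant does not change the quantities ($\lambda$ for CEL, $C$ and $E$ for PROP) that drive the rule.

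First I would treat PROP. A claimant $k$ receives $R_k(N,E,c)=\frac{c_k}{C}E=0$ with $E>0$ precisely when $c_k=0$. Deleting $k$ leaves $E$ unchanged and replaces the total of claims by $C-c_k=C$, so for every remaining $i$ we get $R'_i=\frac{c_i}{C}E=R_i$. (One should note in passing that for an attribution bankruptcy problem $c_k=0$ forces $f(S)=0$ for every $S\ni k$, so $k$ is literally a channel through which no converting user ever passed; but the algebraic argument above already suffices.)

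Next I would treat CEL. Here $R_i(N,E,c)=\max\{0,c_i-\lambda\}$ with $\lambda\ge 0$ determined by $\sum_{i\in N}\max\{0,c_i-\lambda\}=E$. If $R_k(N,E,c)=0$ then $c_k-\lambda\le 0$, hence $c_k\le\lambda$, hence also $\max\{0,c_k-\lambda\}=0$; therefore $\sum_{i\in N\setminus\{k\}}\max\{0,c_i-\lambda\}=\sum_{i\in N}\max\{0,c_i-\lambda\}=E$, which is exactly the equation defining the CEL parameter for the reduced problem $(N\setminus\{k\},E,c_{-k})$. Since that equation has a unique solution, the reduced problem's parameter equals the same $\lambda$, and so $R'_i=\max\{0,c_i-\lambda\}=R_i$ for every $i\in N\setminus\{k\}$.

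The only point needing a word of care — and the closest thing to an obstacle — is well-definedness: one must make sure the reduced problem $(N\setminus\{k\},E,c_{-k})$ is still a genuine bankruptcy problem, i.e.\ that $0<E\le\sum_{i\neq k}c_i$ and (for CEL) that the defining equation still admits a solution. For PROP, $E>0$ is given and $\sum_{i\neq k}c_i=C\ge E$. For CEL, $\sum_{i\neq k}\max\{0,c_i-0\}=\sum_{i\neq k}c_i=C-c_k$; since $c_k\le\lambda$ and $E=\sum_{i\in N}\max\{0,c_i-\lambda\}\le\sum_{i\in N}c_i-c_k=C-c_k$ when $c_k\le\lambda$ (and trivially $E\le C-c_k$ because dropping $k$ only removes a non-positive contribution from the left side at parameter $\lambda$), the reduced problem is feasible. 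With these feasibility checks dispatched, the two displayed computations above complete the proof.
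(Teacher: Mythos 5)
Your argument is correct and follows essentially the same route as the paper: for PROP you observe that a zero award forces $c_k=0$ so that $C$ and $E$ are unchanged, and for CEL you reuse the same $\lambda$, noting that dropping the vanishing term $\max\{0,c_k-\lambda\}$ leaves the defining equation of the reduced problem intact. The only divergence is in the well-definedness check for CEL, where you bound $E=\sum_{i\neq k}\max\{0,c_i-\lambda\}\leq\sum_{i\neq k}c_i$ term by term instead of going through the paper's intermediate claim that $CEL_k=0$ implies $c_k\leq D/n$; your route is the more robust one, since that intermediate claim can fail (e.g.\ $c=(0,1,2,2)$, $E=2$ gives $\lambda=1$, $CEL_2=0$ but $c_2=1>D/n=3/4$) while your inequality always holds.
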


\begin{proof}
Trivially, $PROP_i=0$ if, and only if, $c_i=0$, and thus PROP verifies IPL.

To prove that $CEL$ verifies IPL, we first we prove that the property is correctly defined for CEL, i.e. that the reduced problem is still a bankruptcy problem: $C-c_k=C_{-k}\geq E$.

If $CEL_k(E,c)=0$, then $c_k\leq (C-E)/n=(C_{-k}+c_k-E)/n$ holds. Thus $C_{-k}-E\geq (n-1)c_k\geq 0$, for all $n\geq 1$.

Now, let $\lambda$ the solution of the original problem defining $CEL(N,E,c)$, i.e.
 $CEL_i=\max \{0, c_i-\lambda\}$, for all $i\in N$ and
 $$
 E=\sum_{i\in N} \max \{0, c_i-\lambda\}=\sum_{\substack{i\in N\\ CEL_i(N,E,c)>0}} \max \{0, c_i-\lambda\} .
 $$
 Therefore, $\lambda$ is also the solution for the reduced problem defining $CEL(N\setminus \{k\}, E, c_{-k})$. Moreover, the whole set of irrelevant players can be removed without changing the proposed share.
\end{proof}

\begin{proposition} The CEL and PROP attribution rules verify monotonicity with respect to channel repetition.
\end{proposition}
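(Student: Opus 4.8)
The plan is to mimic the structure of the proof of monotonicity with respect to channel repetition for the Shapley value-like attribution (the analogous proposition in Section~\ref{No_OrderRep}), working now with the bankruptcy problem $(N^r, F^r, c^r)$ built from the extended player set $N^r$. So first I would set up the two attribution problems being compared: $A=(N,P(N),f)$ and $A^{+i}=(N,P^{+i}(N),f^{+i})$, where $p^{+i}$ replaces $p\in P_i(N)$ by the same path with one extra copy of channel $i$ and the same KPI value. I would describe the extended problems: in $A^{+i}$ the channel $i$ may now require an additional fictitious player, say $i^{r_i+1}$, but in the bankruptcy model we only need to track how the estate $F^r$ and the claim vector $c^r$ change. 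The estate is unchanged, since $F^r=\sum_{S^r}f^r(S^r)=\sum_{p}f(p)=B$ and adding the extra copy of $i$ does not change any $f(p)$. The only change is in the claims: the claims of all channels other than $i$ are untouched, while the total claim of the $i$-players increases (the path $p$ now contributes to the claim of one more $i$-copy). Concretely, $c^r_j$ is unchanged for $j\neq i$, and $\sum_{\ell}c^r_{i^\ell}$ increases by $f(p)$, since the new copy $i^{r_i+1}$ picks up the claim $f(p^{+i})=f(p)$ while no $i$-copy loses any claim.

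Next I would recall that the CEL and PROP attribution to channel $i$ is the sum of the shares of its fictitious copies, exactly as in Definition~\ref{def:ShapleyRep}; call these $\mathrm{CEL}_i^r$ and $\mathrm{PROP}_i^r$. For PROP the argument is immediate: each $i$-copy gets $(c^r_{i^\ell}/C^r)F^r$, so $\mathrm{PROP}_i^r=(\sum_\ell c^r_{i^\ell})F^r/C^r$, i.e. it is proportional to the aggregate claim of $i$. Since the aggregate claim of $i$ increases while the aggregate claim $C^r$ of \emph{all} players increases by the same amount $f(p)$ (total claim grows only through $i$'s claim), the function $t\mapsto (a+t)/(A+t)\cdot F^r$ with $a\le A$ is nondecreasing in $t\ge 0$, so $\mathrm{PROP}_i^r$ does not decrease. (If $f(p)=0$ there is nothing to prove.) Here the key quantitative point is just $a\le A$, which holds because $F^r=B\ge \max_j c^r_j\ge$ the aggregate $i$-claim is not quite what we need --- rather we simply use that $a$, the aggregate $i$-claim, is $\le C^r=A$, which is trivially true.

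For CEL I would argue via the parameter $\lambda$. Aggregating over the copies of a single channel does not combine cleanly for CEL ($\sum_\ell\max\{0,c_{i^\ell}-\lambda\}$ is not $\max\{0,(\sum_\ell c_{i^\ell})-\lambda\}$), so instead I would reason copy-by-copy: moving from $A$ to $A^{+i}$ adds a new claimant $i^{r_i+1}$ with claim $f(p)\ge 0$ and leaves every other claim fixed while keeping the estate fixed. Adding a claimant with a positive claim to a bankruptcy problem (keeping the estate) can only weakly increase $\lambda$ --- this is the standard monotonicity of $\lambda$ in the CEL rule: since $\sum_j\max\{0,c_j-\lambda\}$ is nonincreasing in $\lambda$ and the presence of the extra nonnegative term $\max\{0,f(p)-\lambda\}$ raises the left side pointwise, the solution $\lambda$ of the equation $\sum_j\max\{0,c_j-\lambda\}=F^r$ does not decrease. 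A larger $\lambda$ decreases each existing $\max\{0,c_j-\lambda\}$, so channels other than $i$ weakly lose; but by efficiency the total is still $F^r$, so the aggregate share of the $i$-copies (old copies plus the new one) must weakly \emph{increase}, which is exactly $\mathrm{CEL}_i^{r,+i}\ge \mathrm{CEL}_i^r$. I would also note that the reduced/extended problem is genuinely a bankruptcy problem (so the rules are well-defined): $\sum_j c^r_j\ge F^r$ still holds since we only added to the claims, and $F^r=B\ge c^r_j$ for every single copy $j$ by the same token, so condition \eqref{attr_cond} is preserved.

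The main obstacle is the CEL step: unlike PROP, CEL is not a function of the aggregate claim of a channel's copies, so one must be careful that "favouring channel $i$" really does follow from the shift in $\lambda$. The clean way is to phrase it as: adding one nonnegative claimant (the new copy of $i$) weakly raises $\lambda$, hence weakly lowers every non-$i$ share, hence --- by efficiency --- weakly raises the summed $i$-share. One subtlety to address explicitly is the degenerate case $f(p)=0$ (then $A^{+i}$ and $A$ induce the same bankruptcy problem after discarding the irrelevant new copy, and equality holds trivially, using Proposition~\ref{CEL-PROP-IPL}), and the case where the new copy $i^{r_i+1}$ is already among the existing copies because $i$ appeared $\ge r_i$ times in $p$ --- but then $p$ already had $r_i$ copies of $i$, and relabelling shows the same aggregate-claim shift, so the argument is unchanged.
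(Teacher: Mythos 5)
Your proof is correct and follows the same overall strategy as the paper for both rules: for PROP it is the identical computation (the aggregate $i$-claim and the total claim both grow by $f(p)$, and $t\mapsto (a+t)/(A+t)$ is nondecreasing for $a\le A$), and for CEL it is the same mechanism (the parameter $\lambda$ weakly increases, so every non-$i$ claimant weakly loses, and efficiency transfers the difference to the copies of $i$). Where you genuinely diverge is in \emph{how} you establish $\lambda^r\ge\lambda$. The paper gets there through a two-case analysis: if the new copy $i_2$ receives nothing it invokes the irrelevant-players property (Proposition \ref{CEL-PROP-IPL}) to get equality, and otherwise it tracks which channels are excluded or irrelevant in $B$ versus $B^r$ and argues by contradiction that $\lambda^r\ge\lambda$. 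You instead observe directly that the defining function $\lambda\mapsto\sum_j\max\{0,c_j-\lambda\}$ is nonincreasing and that adding the nonnegative term $\max\{0,f(p)-\lambda\}$ raises it pointwise while the estate is unchanged, so the solution of the equation cannot decrease. This one-line monotonicity argument subsumes both of the paper's cases (it holds whether or not the new copy ends up excluded) and avoids the bookkeeping about excluded and irrelevant players entirely; the only thing the paper's route buys is that it makes explicit which claimants become irrelevant after the split, which you only need for the degenerate $f(p)=0$ remark. Your observation that the extended problem stays a legitimate bankruptcy problem (claims only increase, estate fixed) covers the well-definedness point the paper handles via IPL.
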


\begin{proof}
Without loss of generality, we will prove monotonicity in the following case.

Let $( N,P(N), f )$ be an attribution problem with associated bankruptcy problem given by $(N, F, c)$. Assume that  a new path $p^r$ substitutes a path $p$ by repeating once a player $i\in p$ without changing its value $f(p^r)=f(p)$. Then, two new players are added to the problem in such a way that player $i$ is substituted by player $i_1$ in all paths and its repetition is substituted in path $p^r$ by $i_2$. The corresponding new bankruptcy problem is $(N^r, F, c^r)$, where $N^r=N\setminus i \cup\{i_1, i_2\}$, $c^r_j=c_j$, for all $j\in N^r$, $j\neq i_2$ and $c^r_{i_2}=f(p^r)$.

Let us start with the CEL rule. Then, we will prove that
\begin{equation}
        CEL_i(B)\leq CEL_{i_1}(B^r)+CEL_{i_2}(B^r),
        \label{cond:monotonicityCEL}
\end{equation}
where $B$ denotes the original bankruptcy problem $(N,F,c)$ and $B^r$ the new one $(N^r,F^r,c^r)$. We shall distinguish two cases:
\begin{itemize}
\item
Case 1: If $CEL_{i_2}(B^r)=0$ then, by IPL property, $CEL_{i_1}(B^r)=CEL_i(B)$ holds.
\item
Case 2: Otherwise, $CEL_{i_2}(B^r)>0$, then $c^r_{i_2}=f(p)> D^r/(n+1)$, where the deficit $D^r$ for the new bankruptcy problem is $D^r=D+f(p)$. Thus, $f(p)>D/n$ and  $\frac{D+f(p)}{n+1}\geq \frac{D}{n}$. Therefore, $c_j\geq \frac{D+f(p)}{n+1}\geq \frac{D}{n}$, and this implies that every excluded channel in $B$ is also an excluded channel in $B^r$. We show that this also implies that every irrelevant channel in $B$ is also irrelevant in $B^r$. Let $E(B)\subseteq N$ be the set of excluded players in $B$. Then, $c^r_{i_2}=f(p)> (D^r-\sum{j\in E(B)} c_j)/(\vert N\setminus E(B)\vert +1)$ since $CEL_{i_2}(B^r)>0$. Thus,
$$
c_k \geq \frac{D^r-\sum_{ j\in E(B) } c_j}{(\vert N\setminus E(B)\vert +1) } >
\frac{D-\sum_{ j\in E(B) } c_j}{\vert N\setminus E(B)\vert }.
$$
Thus every irrelevant non excluded player in $B$ is also an irrelevant player in $B^r$.

Let $R(B)$ be the set of relevant channels in $B$, $i\in B$ and $CEL_j(B)=c_j-\lambda$, for all $j\in R(B)$ with $E=\sum_{j\in R(B)} (c_j-\lambda)$. We consider again two different cases:
\begin{itemize}
    \item If there exists a channel $k\in R(B)$ which turns out to be irrelevant in $B^r$, then $c_k-\lambda>0>c_k-\lambda^r$, where $\lambda$ and $\lambda^r$ solve, respectively, the problems defining $CEL(B)$ and $CEL(B^r)$. Thus, $\lambda^r>\lambda$ and therefore $CEL_j(B)\geq \max\{ 0, c_j-\lambda^r\}=CEL_j(B^r)$ for all $j\in R(B)$ and condition \eqref{cond:monotonicityCEL} holds taking into account that
$$
\sum_{\substack{j\in R(B)\\ j\neq i}} CEL_j(B) + CEL_i(B)=E= \sum_{\substack{j\in R(B)\\ j\neq i}} CEL_j(B^r) +CE_{i_1}(B^r)+CEL_{i_2}(B^r).
$$
\item
Otherwise, the set $R(B^r)=R(B)\setminus \{i\} \cup \{ i_1,i_2\}$. We will prove that $\lambda^r\geq \lambda$. Let us suppose that $\lambda^r < \lambda$, then it holds:
$$
E=\sum_{\substack{j\in R(B)\\ j\neq i}} CEL_j(B) + CEL_i(B) <  \sum_{\substack{j\in R(B)\\ j\neq i}} CEL_j(B^r) +CE_{i_1}(B^r) \leq E,
$$
which is a contradiction. Therefore, $\lambda^r\geq \lambda$ and the reasoning of the previous case applies.
\end{itemize}
\end{itemize}

In order to prove $PROP_i(B)\leq PROP_{i_1}(B^)r+PROP_{i_2}(B^r)$, it is only necessary to check that:
\begin{equation*}
    \frac{c_iE}{C}\leq \frac{c_iE}{C+f(p)}+\frac{f(p)E}{C+f(p)},
\end{equation*}
which clearly holds since $c_i\leq C$
%
\end{proof}

\subsection{Order relevant case}

Under the same framework that in section \ref{Order} (order players, no repetition), the results obtained for the Shapley value of the sum game can only be replicated for the PROP solution but not in general for the CEL solution, i.e. PROP can be order player decomposed but no CEL.

\begin{example}
\label{ExBankOrder}

Given the campaign data in table \ref{DataBank1}, the CEL and PROP solutions appears in table \ref{SolBank1}. The corresponding order data and solutions are in tables \ref{DataBankOrder1} and \ref{SolBankOrder1}, respectively. Thus, we can check that PROP verifies decomposition with respect order players but not CEL:


$CEL_1=160/3 > CEL_{1_1}+CEL_{1_2}= 50$

$CEL_2=10/3 < CEL_{2_1}+CEL_{2_3}= 10$

$CEL_3=130/3 > CEL_{3_1}+CEL_{3_2}= 40$

In this example, the main reason for this result is that in the decomposition of player $2$, the order player $2_1$ becomes irrelevant with respect CEL rule.

\begin{table}[h]
\caption{Campaign data}
\label{DataBank1}       
\begin{tabular}{lc}
\hline\noalign{\smallskip}
Path $p$ & KPI value $f(p)$
\\
\noalign{\smallskip}\hline\noalign{\smallskip}
$(1)$ & 20  \\
$(1,3)$ & 40  \\
$(3,1,2)$ & 30  \\
$(2,3)$ & 10  \\
\noalign{\smallskip}\hline
\end{tabular}
\end{table}

\begin{table}[h]
\caption{Bankruptcy solutions without order }
\label{SolBank1}       
\begin{tabular}{cccc}
\hline\noalign{\smallskip}
Players  & $1$ & $2$  & $3$
\\
\noalign{\smallskip}\hline\noalign{\smallskip}
CEL & 160/3 & 10/3  &  130/3 \\
PROP & 900/21 & 400/21 & 800/21 \\
\noalign{\smallskip}\hline
\end{tabular}
\end{table}

\begin{table}[h]
\caption{Campaign data with order players }
\label{DataBankOrder1}       
\begin{tabular}{lc}
\hline\noalign{\smallskip}
Path $p$ & KPI value $f(p)$
\\
\noalign{\smallskip}\hline\noalign{\smallskip}
$(1_1)$ & 20  \\
$(1_1,3_2)$ & 40  \\
$(3_1,1_2,2_3)$ & 30  \\
$(2_1,3_2)$ & 10  \\
\noalign{\smallskip}\hline
\end{tabular}
\end{table}

\begin{table}[h]
\caption{Bankruptcy solutions with order }
\label{SolBankOrder1}       
\begin{tabular}{ccccccc}
\hline\noalign{\smallskip}
Players  & $1_1$ & $1_2$  & $2_1$ & $2_3$ & $3_1$ & $3_2$
\\
\noalign{\smallskip}\hline\noalign{\smallskip}
CEL & 40 & 10  &  0 & 10 & 10 & 30 \\
PROP & 600/21 & 300/21 & 100/21 & 300/21 & 300/21 & 500/21  \\
\noalign{\smallskip}\hline
\end{tabular}
\end{table}

\end{example}

When introducing order players (without repetition) in bankruptcy problems, two things are preserved:

\begin{itemize}
    \item The claims of the order players $i_1,...,i_{p_i}$ corresponding to a player $i\in N$ verify: $c_1+\cdots+c_{p_i}=c_i$. Therefore, the deficit $D$ remains just the same. This is the main difference with respect the repetition case.
    \item The extended bankruptcy problem with order players continues to be in the ${\cal F}^N$ class.
\end{itemize}

 Then, the analysis of the changes of a bankruptcy rule for the attribution problem with order players is just an splitting-proof analysis for a general bankruptcy problem. See, for instance, Ju (2003), Ju et al. (2007) and Moreno-Ternero (2007).

 \begin{definition}
 A rule $R$ is splitting-proof if for all $(N, E, c), (N', E, c') \in {\cal B}$, with $N\subset N'$, and such that there is some $i\in N$ such that $c_i = c'_i + \sum_{j\in N'\setminus N} c'_j$ and for each $j \in N\setminus \{i\}$, $c'_j= c_j$ then $R_i(N,E, c) \geq R_i(N',E, c') + \sum_{j\in N'\setminus N} R_j(N',E, c')$.
 \end{definition}

 In particular, PROP is a splitting-proof rule, but moreover it is straightforward to check the following proposition:

\begin{proposition}
If no repetition occurs, PROP rule verifies decomposition with respect to order players.
\end{proposition}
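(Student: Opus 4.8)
The plan is to reduce the statement to a direct computation on the KPI function, exploiting the two facts highlighted in the bullet list just before the proposition: splitting a player $i$ into its order players $i_1,\dots,i_{p_i}$ preserves the total claim ($c_{i_1}+\cdots+c_{i_{p_i}}=c_i$), preserves the estate $F$ (hence the deficit $D$), and keeps the problem inside the class ${\cal F}^N$ of admissible bankruptcy problems. So the extended order problem $B^o=(N^o,F,c^o)$ and the original problem $B=(N,F,c)$ share the same estate $F$ and the same total of claims $C$.

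First I would make precise what ``decomposition with respect to order players'' means for PROP: for every channel $i\in N$,
\[
PROP_i(B)=\sum_{j=1}^{p_i} PROP_{i_j}(B^o).
\]
Then I would simply evaluate both sides using the definition $PROP_k=\frac{c_k}{C}F$. The right-hand side is
\[
\sum_{j=1}^{p_i} PROP_{i_j}(B^o)=\sum_{j=1}^{p_i}\frac{c_{i_j}}{C}F=\frac{F}{C}\sum_{j=1}^{p_i}c_{i_j}=\frac{F}{C}\,c_i=PROP_i(B),
\]
where the crucial middle equality $\sum_{j=1}^{p_i}c_{i_j}=c_i$ is exactly the first bullet point (each conversion path contributes $f(p)$ to the claim of $i$ in $B$ and to the claim of exactly one order player $i_j$ in $B^o$, namely the one recording the position $i$ occupies in $p$; since $i$ appears at most once per path under the no-repetition hypothesis, these contributions match up bijectively). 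The denominator $C$ is unchanged because $\sum_{k\in N^o}c_k^o=\sum_{i\in N}\sum_{j=1}^{p_i}c_{i_j}=\sum_{i\in N}c_i=C$, and likewise $F$ is unchanged, so no normalization constant moves.

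There is essentially no obstacle here: the whole content is the additivity of the claim vector under order-splitting, which has already been recorded, together with the linearity of PROP in the claim coordinate and the invariance of $F$ and $C$. The only point deserving a sentence of care is to confirm that $B^o\in{\cal F}^{N^o}$ so that $PROP(B^o)$ is well defined in the relevant domain — but this is precisely the second bullet point preceding the proposition, and it can be cited rather than reproved. One could alternatively phrase the argument as: PROP is a splitting-proof rule with equality (not merely inequality) whenever the estate is held fixed, because the proportional shares of the pieces sum exactly to the proportional share of the whole; the no-repetition hypothesis guarantees the order-splitting is of this fixed-estate type.
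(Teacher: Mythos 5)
Your proposal is correct and is precisely the ``straightforward check'' the paper leaves to the reader: with the estate $F$ and total claim $C$ unchanged under order-splitting and $\sum_{j=1}^{p_i} c_{i_j}=c_i$, the linearity of $PROP_k=\frac{c_k}{C}F$ in the claim coordinate gives the decomposition immediately, as your computation shows and as the paper's Example~\ref{ExBankOrder} confirms numerically. No gaps.
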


CEL is also a splitting-proof rule (Moreno-Ternero, 2007) under the above definition. It is more difficult to obtain general results when all players can split as in the order case of a bankruptcy attribution problem. However a rather restrictive result can be given:

\begin{proposition}
\label{CELOrdDec}
Given an attribution bankruptcy problem $(N, F, c)\in {\cal F}^N$, if we assume that all players in $N$ and all their correspondent order players are relevant, then $CEL_i\geq \sum_{j=1}^{p_i} CEL_{i_j}$  if, and only if, $p_i/ \sum_{\ell\in N} p_{\ell}\geq 1/n$, where $p_i$ is the number of relevant order players corresponding to player $i\in N$.
\end{proposition}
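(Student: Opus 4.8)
The plan is to lean entirely on the two structural observations recorded just before the statement: passing to order players leaves the estate $F$ and the aggregate claim $C$ (hence the deficit $D=C-F$) unchanged, and merely splits each original claim as $c_i=\sum_{j=1}^{p_i}c_{i_j}$; moreover the extended problem stays in $\mathcal{F}^N$, so $CEL$ is well defined on it. Under the blanket relevance hypothesis the whole argument then reduces to writing the $CEL$ awards in closed form in both problems and comparing them.

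First I would pin down the $CEL$ multiplier of the original problem $(N,F,c)$. Since every player of $N$ is relevant, $CEL_k(N,F,c)=c_k-\lambda>0$ for all $k\in N$ with $\sum_{k\in N}(c_k-\lambda)=F$; summing gives $n\lambda=C-F=D$, hence $\lambda=D/n$ and
\[
CEL_i(N,F,c)=c_i-\frac{D}{n}.
\]
Next, the identical computation in the order-extended problem. Writing $P=\sum_{\ell\in N}p_\ell$ for the total number of order players — all relevant by assumption — we get $CEL_{i_j}=c_{i_j}-\lambda^{o}>0$ with $\sum(c_{i_j}-\lambda^{o})=F$ over all $P$ order players, so $P\lambda^{o}=C-F=D$, i.e. $\lambda^{o}=D/P$. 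Using $\sum_{j=1}^{p_i}c_{i_j}=c_i$,
\[
\sum_{j=1}^{p_i}CEL_{i_j}=\sum_{j=1}^{p_i}\Bigl(c_{i_j}-\frac{D}{P}\Bigr)=c_i-\frac{p_i}{P}\,D.
\]

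Finally I would just compare: $CEL_i\geq\sum_{j=1}^{p_i}CEL_{i_j}$ is equivalent to $c_i-\frac{D}{n}\geq c_i-\frac{p_i}{P}D$, i.e. to $\frac{p_i}{P}D\geq\frac{D}{n}$. For a genuine bankruptcy problem ($D>0$) this holds precisely when $p_i/\sum_{\ell\in N}p_\ell\geq 1/n$, which is the claimed characterization (and if $D=0$ both sides equal $c_i$, so the inequality is trivially an equality). The only steps that need a line of care are: that the order-extended problem indeed still lies in $\mathcal{F}^N$ so the $CEL$ rule applies (already noted in the bullet list preceding the statement), and that the two relevance hypotheses are exactly what allow replacing the truncated formula $\max\{0,c_k-\lambda\}$ by the affine expression $c_k-\lambda$, which is what forces the multipliers to equal $D/n$ and $D/P$. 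There is no genuine obstacle here; the entire content is the observation that, on the relevant players, $CEL$ collapses to an equal-loss (affine) rule, after which the equivalence is a one-line arithmetic comparison.
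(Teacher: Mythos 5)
Your proposal is correct and follows essentially the same route as the paper: under the blanket relevance hypothesis, $CEL$ becomes affine on every claimant, the multipliers work out to $D/n$ and $D/\sum_{\ell}p_{\ell}$ (since the estate, total claim and hence deficit are preserved under the order-player split), and the equivalence is a one-line comparison. You are in fact slightly more careful than the paper, which states the closed forms $CEL_i=c_i-D/n$ and $CEL_{i_j}=c_{i_j}-D/\sum_{\ell}p_{\ell}$ without deriving the multipliers and without flagging the degenerate case $D=0$.
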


\begin{proof}
We have that $CEL_i=c_i-D/n$ and $CEL_{i_j}=c_{i_j}-D/\sum_{\ell\in N} p_{\ell}$ for all $j=1,...,p_i$, taking into account that $c_i=\sum_{j=1}^{p_i} c_{i_j}$, thus $CEL_i\leq \sum_{j=1}^{p_i} CEL_{i_j}$ if and only if $p_i/ \sum_{\ell\in N} p_{\ell}\leq 1/n$.
\end{proof}

\noindent That is, channel $i$ benefits from order player decomposition if its proportion of order players is under the mean $1/n$.

As we have seen in the above example, CEL does not verify in general the decomposition property with respect order players. However, in some particular case, it can occur as an immediate corollary of the above proposition:

\begin{corollary}
Under the same conditions of proposition \ref{CELOrdDec}, if $p_i=p_{\ell}$ for all $i,{\ell}\in N$, CEL verifies decomposition with respect order players.
\end{corollary}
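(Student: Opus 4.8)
The plan is to derive the corollary directly from Proposition \ref{CELOrdDec}, which has already been established. The key observation is that the corollary's hypothesis, $p_i = p_\ell$ for all $i,\ell \in N$, is a special case of the condition $p_i / \sum_{\ell \in N} p_\ell \geq 1/n$ appearing in the proposition, and in fact forces equality there.

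First I would note that if $p_i = p_\ell$ for every pair $i,\ell \in N$, then writing $p$ for this common value we have $\sum_{\ell \in N} p_\ell = n p$, so that $p_i / \sum_{\ell \in N} p_\ell = p / (np) = 1/n$ for each channel $i \in N$. In particular both inequalities $p_i / \sum_{\ell \in N} p_\ell \geq 1/n$ and $p_i / \sum_{\ell \in N} p_\ell \leq 1/n$ hold simultaneously. Applying Proposition \ref{CELOrdDec} (whose conclusion, read together with the sharpening in its proof, gives $CEL_i \geq \sum_{j=1}^{p_i} CEL_{i_j}$ exactly when $p_i/\sum_{\ell} p_\ell \geq 1/n$ and $CEL_i \leq \sum_{j=1}^{p_i} CEL_{i_j}$ exactly when the reverse inequality holds), we obtain both $CEL_i \geq \sum_{j=1}^{p_i} CEL_{i_j}$ and $CEL_i \leq \sum_{j=1}^{p_i} CEL_{i_j}$, hence $CEL_i = \sum_{j=1}^{p_i} CEL_{i_j}$ for every $i \in N$. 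This equality for all channels is precisely the statement that CEL verifies decomposition with respect to order players.

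Alternatively, and perhaps more transparently, I would simply re-run the short computation from the proof of Proposition \ref{CELOrdDec} under the stronger hypothesis: since all players and all their order players are relevant, $CEL_i = c_i - D/n$ and $CEL_{i_j} = c_{i_j} - D/(np)$; summing the latter over $j = 1,\dots,p_i = p$ and using $\sum_{j=1}^{p_i} c_{i_j} = c_i$ gives $\sum_{j=1}^{p_i} CEL_{i_j} = c_i - p \cdot D/(np) = c_i - D/n = CEL_i$, which is the desired identity.

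There is essentially no obstacle here: the corollary is a one-line consequence of the proposition once one substitutes the equal-$p_i$ hypothesis. The only point requiring a little care is to make sure we are entitled to conclude equality rather than just one inequality — this is why I would invoke the ``if and only if'' form of Proposition \ref{CELOrdDec} applied with the condition holding as an equality, or equivalently redo the explicit computation, which leaves no ambiguity.
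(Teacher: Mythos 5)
Your proposal is correct and follows essentially the same route as the paper, which presents the corollary as an immediate consequence of Proposition \ref{CELOrdDec}: with $p_i=p$ for all $i$ the ratio $p_i/\sum_{\ell}p_{\ell}$ equals $1/n$ exactly, and your direct computation $\sum_{j=1}^{p_i}CEL_{i_j}=c_i-p\cdot D/(np)=c_i-D/n=CEL_i$ is precisely the calculation underlying the proposition's proof. Your care in extracting equality (rather than a single inequality) from the ``if and only if'' statement is well placed and resolves the only subtlety.
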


\begin{example}
\label{ExBankOrder2}

Given the campaign data in table \ref{DataBank2}, the CEL and PROP solutions appears in table \ref{SolBank2}. The corresponding order data and solutions are in tables \ref{DataBankOrder2} and \ref{SolBankOrder2}, respectively. In this example there is not irrelevant players and the number of order players are the same. We can check that CEL does verify the decomposition:

$CEL_1=88/3 = CEL_{1_1}+CEL_{1_2}$.

$CEL_2=76/3 = CEL_{2_1}+CEL_{2_3}$.

$CEL_3=136/3 = CEL_{3_1}+CEL_{3_2}$.

\begin{table}[h]
\caption{Campaign data}
\label{DataBank2}       
\begin{tabular}{lc}
\hline\noalign{\smallskip}
Path $p$ & KPI value $f(p)$
\\
\noalign{\smallskip}\hline\noalign{\smallskip}
$(1)$ & 14  \\
$(1,3)$ & 20  \\
$(3,1,2)$ & 36  \\
$(2,3)$ & 30  \\
\noalign{\smallskip}\hline
\end{tabular}
\end{table}

\begin{table}[h]
\caption{Bankruptcy solutions without order}
\label{SolBank2}       
\begin{tabular}{cccc}
\hline\noalign{\smallskip}
Players  & $1$ & $2$  & $3$
\\
\noalign{\smallskip}\hline\noalign{\smallskip}
CEL & 88/3 & 76/3 &  136/3 \\
PROP & 7000/222 & 6600/222 & 8600/222 \\
\noalign{\smallskip}\hline
\end{tabular}
\end{table}

\begin{table}[h]
\caption{Campaign data with orders }
\label{DataBankOrder2}       
\begin{tabular}{lc}
\hline\noalign{\smallskip}
Path $p$ & KPI value $f(p)$
\\
\noalign{\smallskip}\hline\noalign{\smallskip}
$(1_1)$ & 14  \\
$(1_1,3_2)$ & 20  \\
$(3_1,1_2,2_3)$ & 36  \\
$(2_1,3_2)$ & 30  \\
\noalign{\smallskip}\hline
\end{tabular}
\end{table}

\begin{table}[h]
\caption{Bankruptcy solutions with order }
\label{SolBankOrder2}       
\begin{tabular}{ccccccc}
\hline\noalign{\smallskip}
Players   & $1_1$ & $1_2$  & $2_1$ & $2_3$ & $3_1$ & $3_2$ \\
\noalign{\smallskip}\hline\noalign{\smallskip}
CEL & 41/3 & 47/3   &  29/3 & 47/3 & 47/3 & 89/3 \\
PROP & 3400/222 & 3600/222 & 3000/222 & 3600/222 & 3600/222 & 5000/222  \\
\noalign{\smallskip}\hline
\end{tabular}
\end{table}

\end{example}

\section{Conclusions}
\label{Con}

In this paper, we have addressed the attribution problem that arise when the total benefits obtained by a marketing campaign must be distributed among the different advertising channels involved in the campaign, which is nowadays a cornerstone of any multi-channel marketing strategy. We have proposed to rely on Game Theory to propose new methods, as those based on bankruptcy problems, for the assessment of the benefits and/or to analyze rigorously the properties of the derived attribution rules.

We have essentially analyzed two kinds of mechanisms, one of them based on the Shapley value of an appropriate TU game, and the other one based on bankruptcy problems. The first one was already considered in Morales (2016), Cano-Berlanga et al. (2017) and Zhao et al. (2018) for the simplest case in which nor order neither repetition play a role. We have extended its use to these more general cases and we have developed a thorough analysis of its properties in terms of attribution mechanisms. From a practical point of view the Shapley attribution has many advantages: 1) The characteristic function of the sum game is conceptually well defined and has good properties as the monotonic and superadditive properties. Its definition is simple and easy to understand and explain; 2) The calculation method is very simple and also helps to understand the method: the value attributed to a string is the sum of the aliquot part of the value of each combination to which it belongs; 3) Its additivity allows to manage jointly a batch of related campaigns; and 4) Monotonicity properties serve as incentives for the channels to increase their presence.

The approach based on Bankruptcy problems is as far as we know new. We have proposed an appropriate bankruptcy problem to deal with attribution problems which has an intuitive and easy to explain definition. Among the existing bankruptcy rules, it is the CEL rule that establishes an alternative view more different to the Shapley value of the sum game. In this case, the exclusion property is fundamental: it gives zero value to very weak channels, if any, and tends to concentrate the attribution in the channels that belong to the highest valued combinations. As it happens with the Shapley attribution rule, the CEL rule is also simple to calculate. However, additivity is lost and therefore the shares of the aggregated benefit obtained through a batch of  campaigns must be treated as a new sharing problem and can not be decomposed in terms of the shares of each campaign. Taking into account that the value $f(i)$ is obtained only by channel $i$, it would be interesting to consider extended bankruptcy problems in which each agent has an {\em objective entitlement} besides her claim. These models are analyzed in Pulido et al. (2002). From a game theoretical point of view, the general own interest of the subclass ${\cal F}$ must be remarked. To obtain an axiomatic characterization of CEL and PROP rule for this subclass must be tackled in subsequent works.

\end{document}